\documentclass[11pt]{article}

\usepackage{graphicx, amsmath,amsthm,amssymb,url}

\usepackage[labelfont=bf]{subcaption}
\usepackage[labelfont=bf]{caption}
\usepackage{algorithm,algcompatible,mathtools}
\usepackage[multidot]{grffile}
\usepackage[normalem]{ulem} 
\usepackage[breaklinks=true]{hyperref}
\usepackage{fullpage,etoolbox}
\usepackage{color}
\usepackage[numbers,sort,compress]{natbib}
\usepackage{authblk}

\newtheorem{theorem}{Theorem}
\numberwithin{theorem}{section}
\newtheorem{lemma}[theorem]{Lemma}

\newtheorem{proposition}[theorem]{Proposition}
\newtheorem{coro}[theorem]{Corollary} 
\newtheorem{remark}{Remark}

\numberwithin{equation}{section}

\DeclareMathOperator*{\minimize}{minimize \quad}

\DeclareMathOperator*{\st}{subject\,\, to \quad}

\newcommand{\R}{\mathbb{R}}

\newcommand{\RR}{\ensuremath{\mathbb{R}}}
\newcommand{\cbf}{h}

\newcommand{\statedim}{n}
\newcommand{\inputdim}{m}
\newcommand{\xx}{x}

\newcommand{\uu}{u}
\newcommand{\g}{g}
\newcommand{\q}{q}
\newcommand{\f}{f}
\newcommand{\F}{\ensuremath{\mathcal{F}}}
\newcommand{\Ss}{\ensuremath{\mathcal{S}}}

\newcommand{\D}{\ensuremath{\mathcal{D}}}
\newcommand{\bD}{\ensuremath{{\bar{\mathcal{D}}}}}
\newcommand{\C}{\ensuremath{\mathcal{C}}}

\newcommand{\N}{\ensuremath{\mathcal{N}}}

\newcommand{\I}{\ensuremath{\mathcal{I}}}
\newcommand{\U}{\ensuremath{\mathcal{U}}}
\newcommand{\B}{\ensuremath{\mathcal{B}}}
\newcommand{\Lc}{L_h}
\newcommand{\Ln}{L_\nabla}
\newcommand{\Lq}{L_q}
\DeclareMathOperator*{\bd}{bd}

\DeclareMathOperator*{\inte}{int}
\newcommand{\norm}[1]{\| #1 \|}

\newcommand{\Lip}{\mathrm{Lip}}

\newcommand{\Xsafe}{X_{\mathrm{safe}}}
\newcommand{\bXsafe}{\bar{X}_{\mathrm{safe}}}
\newcommand{\Xunsafe}{X_{\mathrm{unsafe}}}

\newcommand{\gammasafe}{\gamma_{\mathrm{safe}}}
\newcommand{\gammaunsafe}{\gamma_{\mathrm{unsafe}}}

\newcommand{\XN}{X_{\mathcal{N}}}
\newcommand{\XS}{X_{\mathcal{S}}}
\newcommand{\Zdynamics}{Z_{\mathrm{dyn}}}
\newcommand{\gammadynamics}{\gamma_{\mathrm{dyn}}}

\newcommand{\calH}{\mathcal{H}}

\newcommand{\ip}[2]{\langle #1, #2 \rangle}
\newcommand{\T}{\mathsf{T}}
\newcommand{\abs}[1]{| #1 |}

\title{Learning Control Barrier Functions from Expert Demonstrations}
\author[1]{Alexander Robey\thanks{A.\ Robey and H.\ Hu contributed equally.}}
\author[1]{Haimin Hu$^{\ast}$}
\author[2]{Lars Lindemann}
\author[1]{Hanwen Zhang}
\author[2]{Dimos V.\ Dimarogonas}
\author[3]{Stephen Tu}
\author[1]{Nikolai Matni}
\affil[1]{Department of Electrical and Systems Engineering, University of Pennsylvania}
\affil[2]{Division of Decision and Control Systems,  KTH Royal Institute of Technology}
\affil[3]{Google Brain Robotics}

\date{April 8, 2020, Revised: \today}

\begin{document}
\maketitle

\begin{abstract}
Inspired by the success of imitation and inverse reinforcement learning in replicating expert behavior through optimal control, we propose a learning based approach to safe controller synthesis based on control barrier functions (CBFs).  We consider the setting of a known nonlinear control affine dynamical system and assume that we have access to safe trajectories generated by an expert — a practical example of such a setting would be a kinematic model of a self-driving vehicle with safe trajectories (e.g., trajectories that avoid collisions with obstacles in the environment) generated by a human driver.  We then propose and analyze an optimization-based approach to learning a CBF that enjoys provable safety guarantees under suitable Lipschitz smoothness assumptions on the underlying dynamical system.  A strength of our approach is that it is agnostic to the parameterization used to represent the CBF, assuming only that the Lipschitz constant of such functions can be efficiently bounded. Furthermore, if the CBF parameterization is convex, then under mild assumptions, so is our learning process.  We end with extensive numerical evaluations of our results on both planar and realistic examples, using both random feature and deep neural network parameterizations of the CBF.  To the best of our knowledge, these are the first results that learn provably safe control barrier functions from data.
\end{abstract}
\section{Introduction}

Consider the following safety-critical scenarios: a self-driving car navigating through traffic, two unmanned aerial vehicles (UAVs) avoiding collision, and a robotic manipulator in a laboratory setting that must avoid injuring researchers.  Although vastly different in terms of their environments, safety-specifications, and underlying dynamics, they share several key properties: (i) their dynamics are well understood and modeled, and can be accurately identified, (ii) their dynamics are inherently \emph{nonlinear}, and (iii) \emph{expert demonstrations} of safe and desirable behavior are readily available or can be easily collected.  Motivated by these unifying properties, this paper proposes the design of safe controllers for known nonlinear dynamical systems based on \emph{control barrier functions} learned from expert demonstrations.

Barrier functions, which are also referred to as barrier certificates, were first proposed in \cite{prajna2007framework} as a means of certifying the safety of dynamical systems with respect to semi-algebraic safe sets.  In that work, a sum-of-squares (SOS) programming \cite{parrilo2000structured} approach for synthesizing polynomial barrier functions for given polynomial systems was also described. The notion of control barrier functions (CBFs) for dynamical control systems was first introduced in \cite{wieland2007constructive} to guarantee the existence of a control law that renders a desired safe set forward invariant. The notion of CBFs was refined by introducing reciprocal \cite{ames2014control} and zeroing CBFs \cite{ames2017control}, which do not require that sub-level sets of the CBF be invariant within the safe set. In particular, zeroing CBFs can be used to compute a minimally invasive ``correction'' to a nominal control law.  Importantly, this correction maintains safety by computing the solution to a quadratic program (QP) \cite{ames2017control}. 

One open problem that has not been fully addressed in prior work is how such CBFs can be synthesized for general classes of systems.  This challenge is similar to that which arises when addressing stability using control Lyapunov functions (CLFs) as the analog to Lyapunov functions \cite{sontag1989universal}. Notably, control Lyapunov functions are a subset of control barrier functions (see \cite{ames2017control} and \cite{xu2015robustness}). Analytic and SOS based approaches to synthesizing CBFs and CLFs are summarized in \cite{ames2019control} and have appeared in \cite{xu2017correctness,wang2018permissive}. These approaches, however, are known to be limited in scope and scalability.

\subsection{Related work on learning and CBFs}
Methods using barrier and control barrier functions to ensure safety and guide exploration during episodic supervised learning of uncertain linear dynamics include
\cite{cheng2019end,wang2018safe,taylor2019learning,taylor2020control}.  These approaches typically assume that a valid (control) barrier function is provided, and should be viewed as complementary to our results.  In \cite{yaghoubi2020training}, an imitation learning based approach is used to to train a deep neural network (DNN) to replicate a CBF based controller.  While the authors of \cite{yaghoubi2020training} present empirical validation of their results, no theoretical guarantees of correctness are provided. The authors of \cite{jin2020neural} jointly learn a control Lyapunov function, a CBF, and
a policy function for which their validity is then verified post-hoc using Lipschitz arguments. The authors of \cite{jin2020neural} jointly learn a control Lyapunov function, a CBF, and
a policy function, and then verify their validity post-hoc using Lipschitz arguments. In \cite{boffi2020learning}, tools from statistical learning theory, are used to learn Lyapunov functions from data for systems with unknown dynamics.    Most similar in spirit to our paper are the results in \cite{srinivasan2020synthesis} and \cite{saveriano2019learning}.  In \cite{srinivasan2020synthesis}, the authors parameterize a CBF by a support vector machine, and use a supervised learning approach to characterize regions of the state-space as safe or unsafe based on collected data.  
While conceptually appealing, we note that their training procedure does not ensure \emph{a priori} that there exist control actions such that the learned safe set can be made forward invariant,\footnote{In particular, they do not ensure that the derivative condition  $\langle\nabla h(x),f(x,u)\rangle + \alpha(h(x))\geq 0$, holds for the learned CBF $h(x)$ at the observed data points, with $f(x,u)$ the system dynamics, and $\alpha$ an extended class $\mathcal K$ function -- see Section \ref{sec:problem} for more details.}  and hence cannot guarantee safe execution of the system.  
In \cite{saveriano2019learning}, a method is proposed which incrementally learns a \emph{linear} CBF by clustering expert demonstrations into linear subspaces and fitting low dimensional representations.  While both papers \cite{srinivasan2020synthesis,saveriano2019learning} empirically validate their methods, neither provide proofs of correctness of the learned CBF.

\vspace{10pt}

\textbf{Contributions.}  In this paper, we propose and analyze an optimization based approach to learning a zeroing CBF (henceforth referred to simply as a CBF) from expert trajectories for known control affine nonlinear systems.  In particular, we provide precise and verifiable conditions on the expert trajectories, an additional auxiliary data-set, and the hyperparameters of the optimization problem so as to ensure that the learned CBF guarantees safe execution of the system.  We further show how the underlying optimization problem can be efficiently solved when it is cast over different function spaces.  In particular, we show that the problem can be solved via convex optimization when the function space lies within a (possibly infinite-dimensional) reproducing kernel Hilbert space (RKHS); alternatively, when we consider the function space of deep neural networks (DNNs), the problem can be solved via first-order stochastic methods such as Adam or SGD.  To the best of our knowledge, these are the first such results that learn a CBF from expert demonstrations with provable safety guarantees.

\vspace{10pt}

\textbf{Paper structure.}  The rest of this paper is structured as follows.  In Section \ref{sec:problem}, we introduce notation and formulate the general problem of learning a CBF from expert demonstrations.  In Section \ref{sec:optim}, we derive a set of sufficient conditions on the learned CBF and data-set that guarantee safety of the resulting closed-loop system, and we subsequently use these conditions to formulate an optimization problem for computing a function satisfying these conditions.  We show in Section \ref{sec:compute} that this optimization problem can be efficiently solved for CBFs embedded in RKHS and DNN function classes, and in Section \ref{sec:data}, we provide further details on the expert trajectory collection process.  We present three numerical studies in Section \ref{sec:experiments}: (i) a two-dimensional planar problem for which we explicitly compute and verify all of the conditions of our main theorem, showing that the conditions are indeed satisfied in practice, (ii) a two UAV collision-avoidance example where expert trajectories are generated by the closed form CBF from \cite{squires2018constructive}, and (iii) the same two UAV collision avoidance example, where now expert trajectories are generated by human players of a video game interface.  We end with conclusions and discussions of directions for future work in Section \ref{sec:conclusion}.
\section{Preliminaries and problem formulation}
\label{sec:problem}
Let $\RR$ and $\RR_{\ge 0}$ be the set of real and non-negative real numbers, respectively, and $\RR^\statedim$  the set of $\statedim$-dimensional real vectors. For $\epsilon>0$ and $p\ge 1$, we let $\B_{\epsilon,p}(\bar{\xx}):=\{\xx\in\RR^\statedim\, \big{|}\, \|\xx-\bar{\xx}\|_p\le \epsilon\}$ denote the closed $p$-norm ball around $\bar{\xx}\in\RR^\statedim$.  For a given set $\C$, we denote by $\bd(\C)$, $\inte(\C)$, and $\C^c$  the boundary, interior, and complement of $\C$, respectively. For two sets $\C_1$ and $\C_2$, we denote their Minkowski sum by $\C_1\oplus\C_2:=\{\xx_1+\xx_2\in\RR^\statedim|\xx_1\in\C_1,\xx_2\in\C_2\}$. A continuous function $\alpha:\RR\to\RR$ is an extended class $\mathcal{K}$ function if it is strictly increasing with $\alpha(0)=0$.  The inner-product between two vectors $x,y\in\RR^n$ is denoted by $\langle x,y\rangle$.

\subsection{Valid control barrier functions}
At time $t\in\RR_{\ge 0}$, let $\xx(t)\in\RR^\statedim$ and $\uu(t)\in\RR^\inputdim$ be the state and input, respectively, of the dynamical control system  described by the initial value problem
\begin{align}\label{eq:system}
\dot{\xx}(t)=\f(\xx(t))+\g(\xx(t))\uu(t), \quad \xx(0)\in\RR^\statedim
\end{align}
where $\f:\RR^\statedim\to\RR^\statedim$ and $\g:\RR^\statedim\to\RR^\inputdim$ are locally Lipschitz continuous functions. Let the unique solution to \eqref{eq:system} under a locally Lipschitz continuous control law $\uu:\RR^\statedim\to\RR^\inputdim$ be $\xx:\I\to \RR^\statedim$ where $\I\subseteq\RR_{\ge 0}$ is the maximum definition interval of $\xx$. Note that we do not explicitly assume forward completeness of \eqref{eq:system} under $\uu$ here, i.e., $\I$ may be bounded.

Consider next a {twice} continuously differentiable function $\cbf:\RR^\statedim\to\RR$, and define the set 
\begin{align}\label{eq:set_C}
\C:=\{\xx\in\RR^\statedim \, \big{|} \, \cbf(\xx)\ge 0\},
\end{align}
as the set that we wish to certify as safe, i.e., the set $\C$ satisfies prescribed safety specifications and can be made forward invariant through an appropriate choice of control action. We further assume that $\C$ has non-empty interior, and let $\D$ be an open set such that $\D\supset \C$.  The function $\cbf(\xx)$ is said to be a \emph{valid control barrier function} on $\D$ if there exists a locally Lipschitz continuous extended class $\mathcal{K}$ function $\alpha:\RR\to\RR$ such that 
\begin{align}\label{eq:cbf_const}
\sup_{\uu\in \U} 
\big\langle \nabla \cbf(\xx), \f(\xx)+\g(\xx)\uu\big\rangle \ge -\alpha(\cbf(\xx))
\end{align} 
holds for all $\xx\in\D$, where $\U\subset\RR^\inputdim$ defines constraints on the control input $u$. Consequently, we define the set of \emph{CBF consistent inputs} induced by a valid CBF $h(x)$ to be 
\[
K_{\text{CBF}}(\xx):=\{\uu\in\RR^\inputdim \, \big{|} \, 
\langle \nabla \cbf(\xx), \f(\xx)+\g(\xx)\uu\rangle \ge -\alpha(\cbf(\xx))\}.\]

The next result follows from \cite{ames2017control,xu2015robustness}.
\begin{lemma}
Assume that $\cbf(\xx)$ is a valid control barrier function on $\D$ and that $\uu:\D\to \U$ with $\uu(\xx)\in K_{\text{CBF}}(\xx)$ is locally Lipschitz continuous. Then it holds that $\xx(0)\in\C$ implies $\xx(t)\in\C$ for all $t\in \I$. If the set $\C$ is compact, it additionally follows that $\C$ is: 1) forward invariant, i.e., $\I=[0,\infty)$, and 2) asymptotically stable, which implies that $\xx(t)$ approaches $\C$ as $t\to\infty$ when $\xx(0)\in\C^c \cap \D$.
\end{lemma}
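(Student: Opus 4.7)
The plan is to reduce the assertion to a scalar comparison argument on the signal $\eta(t) := \cbf(\xx(t))$. First I would note that since $\uu(\cdot)$, $\f$, and $\g$ are locally Lipschitz, the closed-loop vector field $\f(\xx)+\g(\xx)\uu(\xx)$ is locally Lipschitz on $\D$, so the system \eqref{eq:system} admits a unique solution on some maximal interval $\I$, and $\eta$ is continuously differentiable. By the chain rule and the hypothesis $\uu(\xx)\in K_{\text{CBF}}(\xx)$,
\begin{equation*}
\dot{\eta}(t) = \langle \nabla\cbf(\xx(t)),\,\f(\xx(t))+\g(\xx(t))\uu(\xx(t))\rangle \;\geq\; -\alpha(\eta(t)) \quad \text{for all } t\in\I.
\end{equation*}

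Next I would apply the scalar comparison lemma. Let $\zeta$ denote the unique solution (uniqueness follows from local Lipschitzness of $\alpha$) to $\dot{\zeta}=-\alpha(\zeta)$ with $\zeta(0)=\eta(0)$. Because $\alpha$ is extended class $\mathcal{K}$ with $\alpha(0)=0$, the origin is an equilibrium of this scalar ODE, and the sign of $-\alpha(\zeta)$ is opposite that of $\zeta$. In particular, $\zeta(0)\geq 0$ implies $\zeta(t)\geq 0$ for all $t\geq 0$. The comparison lemma then gives $\eta(t)\geq \zeta(t)\geq 0$ on $\I$, so $\xx(0)\in\C$ implies $\xx(t)\in\C$ for all $t\in\I$, which is the first claim.

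For the compact case, since $\xx(t)$ remains in the compact set $\C$ for all $t\in\I$ and the closed-loop vector field is locally Lipschitz on a neighborhood of $\C$, trajectories cannot escape in finite time, so $\I=[0,\infty)$, establishing forward invariance. For asymptotic stability, consider $\xx(0)\in\C^c\cap\D$, giving $\eta(0)<0$. The comparison lemma again yields $\eta(t)\geq \zeta(t)$; since $\alpha$ is extended class $\mathcal{K}$, $-\alpha(\zeta)>0$ whenever $\zeta<0$, so $\zeta$ is strictly increasing and bounded above by $0$, hence $\zeta(t)\to 0$ as $t\to\infty$. Combined with the sign information, this forces $\eta(t)\to 0$, which by continuity of $\cbf$ and compactness of $\C$ translates to $\mathrm{dist}(\xx(t),\C)\to 0$.

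The main obstacle I expect is carefully handling the asymptotic-convergence claim: one must justify that $\zeta(t)$ actually reaches $0$ (rather than stalling at some $\zeta_\infty<0$), and then translate scalar convergence of $\eta$ into set-theoretic convergence of $\xx(t)$ toward $\C$. Both steps lean on $\alpha$ being \emph{strictly} increasing together with compactness of $\C$ and continuity of $\cbf$; without either, one still obtains forward invariance but loses the attraction property.
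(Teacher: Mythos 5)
Your proof is correct and takes exactly the route the paper itself endorses: the paper does not prove this lemma (it is imported from \cite{ames2017control,xu2015robustness}), but the remark immediately following it indicates that the Comparison Lemma applied to $\eta(t)=\cbf(\xx(t))$ along the closed-loop flow --- precisely your argument --- is the intended proof, avoiding the boundary regularity needed for Nagumo's theorem. One small imprecision in your asymptotic-stability step: $\eta(t)\geq\zeta(t)$ with $\zeta(t)\uparrow 0$ yields only $\liminf_{t\to\infty}\eta(t)\geq 0$, not $\eta(t)\to 0$, since the trajectory may enter $\C$ and $\cbf$ may then grow; but because entering $\C$ is absorbed by the invariance you already established, the squeeze $\zeta(t)\leq\min\{\eta(t),0\}\leq 0$ still gives $\min\{\cbf(\xx(t)),0\}\to 0$, and the compactness/continuity argument you sketch then converts this into $\mathrm{dist}(\xx(t),\C)\to 0$.
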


Note that $\cbf(\xx)\neq 0$ when $x\in\bd(\C)$ (see \cite[Remark 5]{ames2019control}) is not required when using the Comparison Lemma instead of Nagumo's theorem to prove the above result. While the previous result provides strong guarantees of safety given a valid control barrier function, one is still left with the potentially daunting task of finding a twice continuously differentiable function $h$ such that (i) the set $\C$ defined in equation \eqref{eq:set_C} captures a sufficiently large volume of ``safe'' states needed for the task at hand, and (ii) that it satisfies the derivative constraint \eqref{eq:cbf_const} on an open set $\D\supseteq\C$.  While safety constraints are often naturally specified on a subset of the configuration space of a system (e.g., to avoid collision, vehicles must maintain a minimum separating distance), ensuring that a CBF specified using such geometric intuition also satisfies constraint \eqref{eq:cbf_const} can involve verifying complex relationships between the vector field of the system, the candidate control barrier function, and its gradient. 

As described in the introduction, this challenge motivates the approach taken in this paper, wherein we propose an optimization based approach to learning a CBF from expert demonstrations for a system with known dynamics.  

\subsection{Problem formulation}
To formalize the previous discussion, we explicitly distinguish between geometric safety specifications, i.e., those that can be directly specified on (a subset) of the state-space of the system $x \in \RR^n$, and the set $\C$ defined in equation \eqref{eq:set_C} that is certified as safe by the CBF.  To that end, let $\Ss\subseteq \RR^\statedim$ define the aforementioned geometric safe set.  

Toward the goal of learning a valid CBF, we assume that we are given a set of \emph{expert trajectories}\footnote{We refer to the collection of data points $\Zdynamics$ as expert trajectories to emphasize that this is a natural way of collecting the $\{(x_i,u_i)\}$ pairs from the system \eqref{eq:system}.  We note however that our method simply requires a collection of state-action pairs $\{(x_i,u_i)\}$ demonstrating safe behavior, and that they need not arise from sequential sampling of expert trajectories.} consisting of $N_1$ discretized data-points $\Zdynamics:=\{(\xx_i,\uu_i)\}_{i=1}^{N_1}$ such that $\xx_i\in\inte(\Ss)$.  This is illustrated in Figure \ref{fig:exp-trajs}. For $\epsilon>0$, we define the sets
\begin{align}\label{eq:set_D}
    \D':=\bigcup_{i=1}^{N_1}\B_{\epsilon,p}(\xx_i)\qquad\text{and}\quad \ \D := \D' \backslash \mathrm{bd}(\D')
\end{align} 
where $\mathcal{D}$ needs to be such that  $\mathcal{D}\subseteq \mathcal{S}$ to later ensure correctness of the learned CBF. This can be easily achieved even when data-points $x_i$ are close to $\text{bd}(\mathcal{S})$  by adjusting $\epsilon$ or by omitting $x_i$. Several comments are in order.  First, note that we define $\D$ based on expert trajectories for which control inputs $\uu_i$ are available so that the derivative constraint \eqref{eq:cbf_const} can be enforced during learning.  Second, by construction, the $x$ component of $\Zdynamics$ defines an $\epsilon$-net over $\D$, i.e., for all $\xx \in \D$, (slightly abusing notation) there exists $x_i\in\Zdynamics$ such that $\|x_i-x\|_p\leq \epsilon$.  Finally, conditions on $\epsilon$ will be specified later to ensure the validity of the learned CBF.

\begin{remark}
We note that a conceptually similar approach, defined in terms of taking a point-wise union over previously seen safe trajectories, is used to define a safe terminal set in the Learning Model Predictive Control method of \cite{rosolia2017learning}.
\end{remark}

We next define the set $\N$, for $\sigma>0$, as
\begin{align*}
    \N:=\{\bd(\D)\oplus\B_{\sigma,p}(0)\} \setminus \D,
\end{align*} 
which should be thought of as a ``layer'' of width $\sigma$ surrounding the set $\D$; see Figure \ref{fig:learned-safe-set} for a graphical depiction.  As will be made clear in the sequel, by enforcing that the value of the learned CBF $\cbf(\xx)$ is negative on the set $\N$, which can be accomplished through appropriate sampling, we ensure that the zero level set $\{ \xx \in \RR^n \, | \, \cbf(\xx) = 0 \}$ is contained within the set $\D$, which is a necessary condition for $\cbf(\xx)$ to be valid.

While the above definition of a CBF is specified over all of $\RR^n$, e.g., the definition of the set $\C$ in equation \eqref{eq:set_C} considers all $x\in \RR^n$ such that $h(x) \geq 0$, we make a minor modification to this definition in order to restrict the domain of interest to the set $\N\cup\D$, i.e., we will certify that $h(x)$ is a valid \emph{local} CBF over the set $\D$ with respect to the set
\begin{align} \label{eq:local_C}
\C:=\{\xx\in\N\cup\D\, \big{|} \, \cbf(\xx)\ge 0\}.
\end{align}

This restriction is natural, as we are learning a CBF $h(x)$ from data sampled only over the domain $\N\cup\D$, and we will show that the inclusion $\C\subset\D\subseteq\Ss$ holds.  It then follows that if $\cbf(\xx)$ is shown to satisfy the derivative constraint \eqref{eq:cbf_const} for all $\xx \in \D$, then both the set $\C$, as defined in \eqref{eq:local_C}, and the set $\D$ can be made forward invariant by some $u\in K_{\text{CBF}}(\xx)$, i.e., by some control action $u\in\U$ satisfying the derivative condition \eqref{eq:cbf_const} with respect to the learned CBF $h(x)$. 

\begin{figure}
    \centering
    \begin{subfigure}[t]{0.3\textwidth}
        \includegraphics[width=\textwidth]{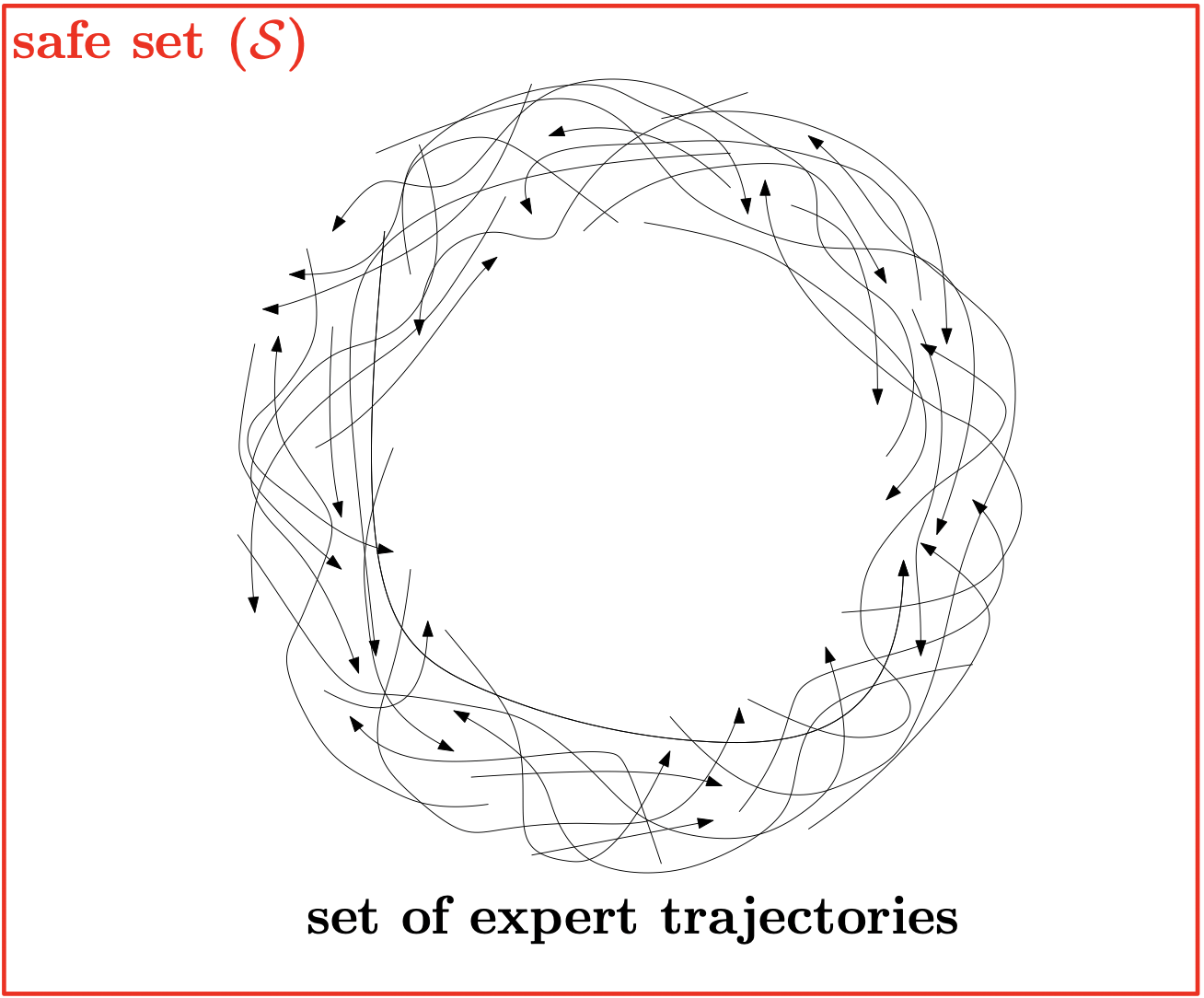}
        \caption{\textbf{Problem setup.}}
        \label{fig:exp-trajs}
    \end{subfigure}\quad 
    \begin{subfigure}[t]{0.3\textwidth}
        \includegraphics[width=\textwidth]{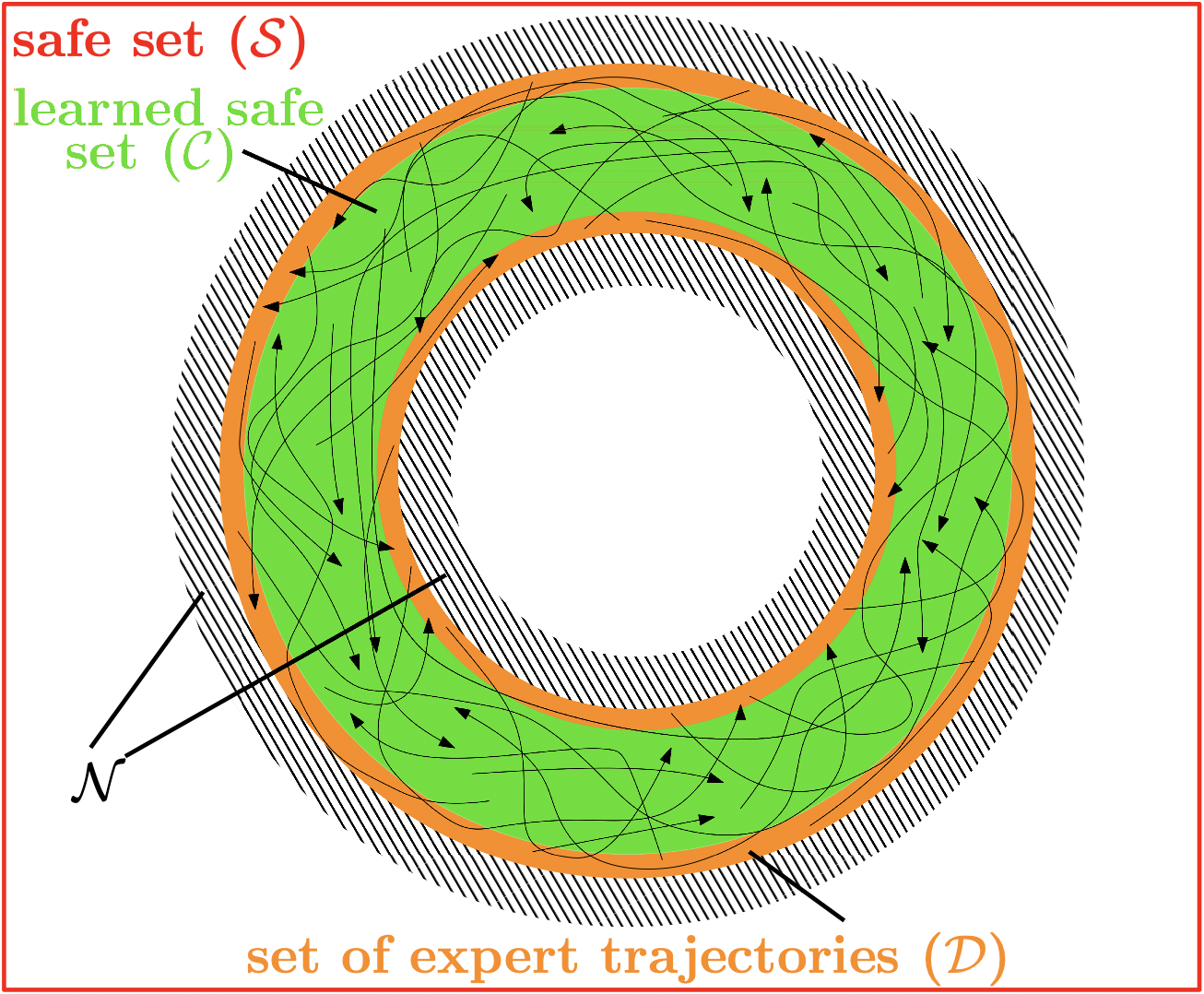}
        \caption{\textbf{Desired result.}}
        \label{fig:learned-safe-set}
    \end{subfigure}\quad 
    \begin{subfigure}[t]{0.3\textwidth}
        \includegraphics[width=\textwidth]{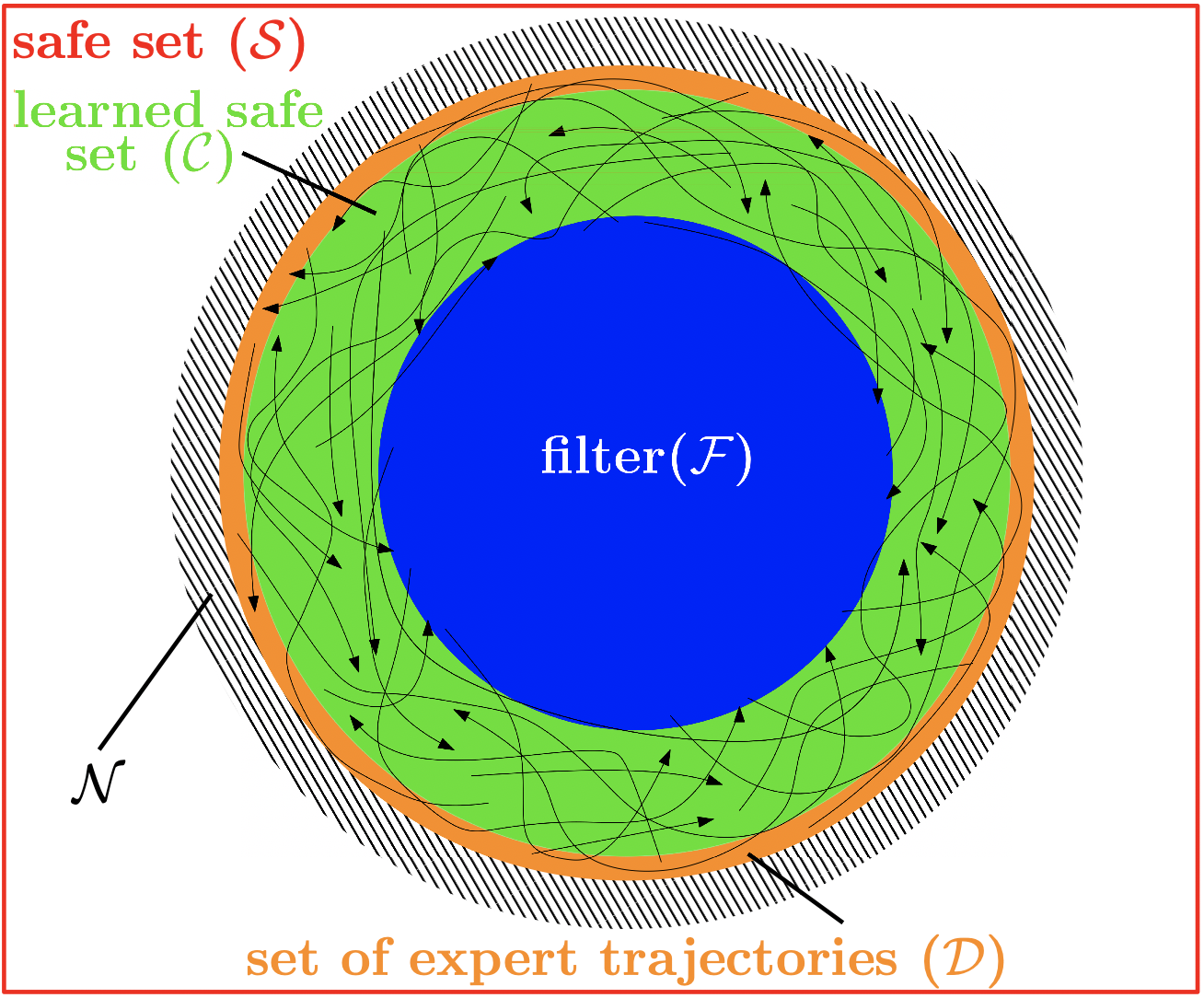}
        \caption{\textbf{Control barrier filter.}}
        \label{fig:control-barrier-filter}
    \end{subfigure}
    \caption{In (a), the safe set $\Ss$ (red box) and the set of expert trajectories (black lines).  Next, in (b), the set $\D$ (orange ring) is the union of $\epsilon$ balls around the expert trajectories. The set $\N$ (black striped rings), defined around $\D$, ensures that the learned safe set $\C$ (green ring), which is defined via the learned valid control barrier function $\cbf(\xx)$, is such that $\C\subset\D\subseteq\Ss$.  Finally, in (c), ``artiticial'' unsafe samples are no longer introduced in the center of the safe set (denoted by the blue set $\F$).}
    \label{fig:1}
\end{figure}

\section{An optimization based approach}
\label{sec:optim}

In this section, we define and analyze an optimization based approach to {synthesizing}
valid local control barrier functions from expert demonstrations.  To this end, let $\calH$ be a normed function space of twice continuously differentiable 
functions $h : \RR^n \to \RR$ {for which local Lipschitz bounds 
\begin{align*}
    \Lc(x) := \sup_{x_1, x_2 \in \B_{\epsilon,p}(x)} \frac{\abs{h(x_1) - h(x_2)}}{\norm{x_1 - x_2}_p}
\end{align*}
can be efficiently estimated}.  Commonly used examples of such spaces include infinite dimensional reproducing kernel Hilbert spaces (RKHS) such as those defined by random Fourier (RF) features \cite{ rahimi2008random}, and more recently deep neural networks (DNNs) \cite{fazlyab2019efficient}.  We defer a discussion of results specific to these two classes of CBFs to the end of this section, and focus now on a general method applicable to these, and other, spaces $\calH$.

Recall the definition of $\Zdynamics$ and define the set $\Xsafe = \{ x_i : (x_i, u_i) \in \Zdynamics \}$. 
We also assume that points $\XN = \{\xx_i\}_{i=1}^{N_2}$ are sampled from the set $\N$ such that $\XN$ forms an $\bar{\epsilon}$-net of $\N$ -- conditions on $\bar{\epsilon}$ will be specified in the sequel. We emphasize that no associated inputs $\uu_i$ are needed for the samples $\XN \subset \N$, as these points are not generated by the expert, and can instead be obtained by simple computational methods such as gridding or uniform sampling.

We begin by deriving a set of sufficient conditions in terms of constraints on the learned CBF $\cbf(\xx)$, as well as conditions on the data-sets $\Xsafe$ and $\XN$, that ensure that $\cbf(\xx)$ is a valid local CBF on $\D$.  We then use these constraints to formulate an optimization problem that can be efficiently solved for the aforementioned function classes $\calH$.

\subsection{Guaranteeing $\C\subset\D\subseteq\Ss$}

We begin with the simple and intuitive requirement that the learned CBF $\cbf(\xx)$ satisfy
\begin{align}\label{eq:safe}
    \cbf(\xx_i)\geq\gammasafe \quad\forall x_i \in \Xsafe,
\end{align}
for a yet to be specified parameter $\gammasafe>0$.  This in particular ensures that the set $\C$ over which $\cbf(\xx)\geq 0$, as defined in equation \eqref{eq:local_C}, has non-empty interior. 

We now derive conditions under which the learned CBF satisfies $h(x) < 0$ for all $ x \in \N$,
which in turn ensures that $\C\subset\D\subseteq\Ss$ due to constraint \eqref{eq:safe}.

\begin{proposition}\label{thm:0}
Let $\cbf(\xx)$ be Lipschitz continuous with local Lipschitz constant $\Lc(\xx)$.
Let $\gammaunsafe>0$ and $\XN$ be an $\bar{\epsilon}$-net of $\N$ with $\bar{\epsilon}<\gammaunsafe/\Lc(x_i)$ for all $x_i \in \XN$. Then, if 
\begin{align}\label{eq:unsafe}
    \cbf(\xx_i)\leq-\gammaunsafe \quad\forall x_i \in \XN
\end{align} 
it holds that $\cbf(\xx)<0$ for all $x\in\N$.
\end{proposition}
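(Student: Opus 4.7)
The plan is to argue by a standard covering-and-Lipschitz sandwich. Given an arbitrary $x \in \N$, I would first invoke the hypothesis that $\XN$ is an $\bar\epsilon$-net of $\N$ to produce some $x_i \in \XN$ with $\norm{x - x_i}_p \le \bar\epsilon$. The point $x$ then lies in a small neighborhood of $x_i$ on which the local Lipschitz bound $\Lc(x_i)$ controls the oscillation of $\cbf$.

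Next I would combine three inequalities at $x_i$: (i) the hypothesis $\cbf(x_i) \le -\gammaunsafe$ from \eqref{eq:unsafe}; (ii) the Lipschitz-based estimate
\begin{align*}
    \abs{\cbf(x) - \cbf(x_i)} \le \Lc(x_i)\, \norm{x - x_i}_p \le \Lc(x_i)\, \bar\epsilon;
\end{align*}
and (iii) the net-width condition $\bar\epsilon < \gammaunsafe / \Lc(x_i)$. Chaining these gives
\begin{align*}
    \cbf(x) \le \cbf(x_i) + \Lc(x_i)\, \bar\epsilon < -\gammaunsafe + \gammaunsafe = 0,
\end{align*}
which is the desired conclusion, and since $x \in \N$ was arbitrary this establishes $\cbf(x) < 0$ on all of $\N$.

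The only real subtlety — and the one place I would be careful — is making sure that the local Lipschitz constant $\Lc(x_i)$, which by definition controls oscillation of $\cbf$ on the $p$-ball $\B_{\epsilon,p}(x_i)$ appearing in \eqref{eq:set_D}, is actually applicable to the pair $(x, x_i)$. This requires $\norm{x - x_i}_p \le \epsilon$, which follows provided $\bar\epsilon \le \epsilon$; otherwise one must reinterpret $\Lc(x_i)$ as a bound over a slightly enlarged neighborhood that still contains $x$. Assuming the net resolution $\bar\epsilon$ is taken no larger than $\epsilon$ (a mild and natural condition given the construction of $\N$ as a width-$\sigma$ collar around $\D$), the argument above goes through verbatim and no further calculation is needed. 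The strict inequality in the final conclusion comes directly from the strict inequality assumed in the hypothesis $\bar\epsilon < \gammaunsafe/\Lc(x_i)$.
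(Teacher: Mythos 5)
Your proposal is correct and follows essentially the same argument as the paper: pick the net point $x_i$ near an arbitrary $x\in\N$, and chain the hypothesis $\cbf(x_i)\le-\gammaunsafe$ with the local Lipschitz bound and the net-width condition $\bar{\epsilon}<\gammaunsafe/\Lc(x_i)$ to conclude $\cbf(x)<0$. Your added remark that $\Lc(x_i)$ is defined over $\B_{\epsilon,p}(x_i)$ and hence requires $\bar{\epsilon}\le\epsilon$ (or an enlarged neighborhood) is a fair point of care that the paper leaves implicit, but it does not change the route of the proof.
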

\begin{proof} 
By equation \eqref{eq:unsafe}, we have that $\cbf(\xx_i)\leq-\gammaunsafe$ for each $\xx_i\in\XN$. We then have, for any $x\in\N$, that there exists a point $x_i \in \XN$ satisfying $\norm{x-x_i}_p\leq \bar{\epsilon}<\gammaunsafe/L(x_i)$, from which the following chain of inequalities follows immediately
\begin{align*}
    \cbf(\xx) &=  \cbf(\xx)-\cbf(\xx_i) + \cbf(\xx_i) \leq |\cbf(\xx)-\cbf(\xx_i)| - \gammaunsafe  \\
    & \leq \Lc(x_i)\norm{x-x_i}_p - \gammaunsafe \leq \Lc(x_i)\bar{\epsilon} - \gammaunsafe < 0
\end{align*}
where the first inequality follows from the assumption that $\cbf(\xx_i)\leq - \gammaunsafe$ for all $\xx_i \in \XN$, the second by the local Lipschitz assumption on $\cbf(x)$, the third by the assumption that $\XN$ forms an $\bar{\epsilon}$-net of $\N$, and the final inequality by the condition on $\bar{\epsilon}$ of the proposition.
\end{proof}

We note that as stated, the constraints \eqref{eq:safe} and \eqref{eq:unsafe}, as well as the condition $\bar{\epsilon}<\gammaunsafe/\Lc(x_i)$ of Proposition \ref{thm:0} may be incompatible, leading to infeasibility of an optimization problem built around them.  This incompatibility arises from the fact that we are simultaneously asking for the value of $\cbf(\xx)$ to vary from $\gammasafe$ to $\gammaunsafe$ over a short distance $\bar{\epsilon}$ while having a low Lipschitz constant.  In particular, as posed, the constraints require that $|h(x_s)-h(x_u)|\geq \gammasafe+\gammaunsafe$ for $x_s \in \Xsafe$ and $x_u\in \XN$ safe and unsafe samples, respectively, but the sampling requirements imply that $\norm{x_s-x_u}_2 \leq \bar{\epsilon}+\epsilon$ for at least some pair $(x_s,x_u)$, which in turn implies that 
\begin{align*}
    L(x_u) \gtrsim \frac{|h(x_s)-h(x_u)|}{\norm{x_s-x_u}_2} \gtrsim \frac{\gammasafe + \gammaunsafe}{\bar{\epsilon}+\epsilon}.
\end{align*}
Thus, if $\gammasafe$ and $\gammaunsafe$ are chosen to be too large, we may exceed the required bound of $\gammaunsafe/\bar{\epsilon}$, and set over which $h(x)\geq 0$ may be undesirably small (i.e., the volume of $\C$ would be too small).  

We address this issue as follows: for fixed $\gammasafe$, $\gammaunsafe$, and $\Lc := \sup_{x_i \in \XN}\Lc(x_i)$, constraint \eqref{eq:safe} is relaxed to 
\begin{align}\label{eq:xsafe}
     h(x_i) \geq \gammasafe \:, \:\: x_i \in \bXsafe,
\end{align}
where now
\begin{equation}\label{eq:bXsafe}
\bXsafe=\left\{x_i \in \Xsafe \, \big{|} \, \inf_{x\in\XN}\|\xx-\xx_i\|_p \ge \frac{\gammaunsafe + \gammasafe}{\Lc} \right\}
\end{equation} 
corresponds to an inner subset of expert trajectory samples.  Intuitively, this introduces a buffer region across which $\cbf(\xx)$ can vary in value from $\gammasafe$ to $-\gammaunsafe$ without having an excessively large Lipschitz constant.  
A near identical argument as that used to prove Proposition \ref{thm:0} can now be used to guarantee that the set $\C$ defined in equation \eqref{eq:local_C} contains the set 
\begin{align*}
    \bD=\bigcup_{x_i \in \bXsafe}\B_{\epsilon,p}(\xx_i),
\end{align*}
defined as the union of $\epsilon$-balls around the points in $\bXsafe$, and thus, $ \bD \subseteq \C$ can be seen as a {``minimum-volume''}
guarantee on the set $\C$. 
\begin{coro}\label{coro:0}
Let $\cbf(\xx)$ be Lipschitz continuous with local constant $L_h(\xx)$.
Let $\gammasafe>0$, and $\Xsafe$ be an ${\epsilon}$-net of $\D$ with ${\epsilon}\leq\gammasafe/\Lc(x_i)$ for all $x_i \in \bXsafe$. Then, if constraint \eqref{eq:xsafe} is satisfied, it holds that $\cbf(\xx)\geq 0$ for all $x\in\bD$.
\end{coro}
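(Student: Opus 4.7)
The plan is to mirror the argument of Proposition~\ref{thm:0}, only now bounding $\cbf(\xx)$ from below on $\bD$ rather than from above on $\N$. The structure is essentially: pick an arbitrary $x\in\bD$, locate a nearby sample in $\bXsafe$ on which the constraint \eqref{eq:xsafe} controls $\cbf$ from below, and then use the local Lipschitz bound together with the hypothesis $\epsilon\le \gammasafe/L_h(x_i)$ to absorb the deviation.

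Concretely, I would proceed as follows. Fix $x\in\bD$. By the definition $\bD=\bigcup_{x_i\in\bXsafe}\B_{\epsilon,p}(x_i)$, there exists some $x_i\in\bXsafe$ with $\|x-x_i\|_p\le\epsilon$. Since $x_i\in\bXsafe\subseteq\Xsafe$ lies in the ball $\B_{\epsilon,p}(x_i)$ together with $x$, the local Lipschitz bound gives
\begin{equation*}
|\cbf(\xx)-\cbf(\xx_i)|\le L_h(x_i)\,\|x-x_i\|_p\le L_h(x_i)\,\epsilon.
\end{equation*}
Combining this with \eqref{eq:xsafe} yields
\begin{equation*}
\cbf(\xx)=\cbf(\xx_i)+\bigl(\cbf(\xx)-\cbf(\xx_i)\bigr)\ge \gammasafe-L_h(x_i)\,\epsilon\ge 0,
\end{equation*}
where the last inequality invokes the hypothesis $\epsilon\le \gammasafe/L_h(x_i)$. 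Since $x\in\bD$ was arbitrary, this gives $\cbf(\xx)\ge 0$ on all of $\bD$, which is the claim.

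There is essentially no ``hard part'' here; the only subtlety is verifying that the invocation of the local Lipschitz bound is legitimate, i.e., that both $x$ and its nearest sample $x_i$ lie inside the ball $\B_{\epsilon,p}(x_i)$ on which $L_h(x_i)$ governs variation of $\cbf$. That is immediate from the construction of $\bD$ as the union of such $\epsilon$-balls. The result is the natural ``lower-bound'' counterpart of Proposition~\ref{thm:0}, and together with that proposition it delivers the sandwich $\bD\subseteq\C\subset\D\subseteq\Ss$ needed in the sequel.
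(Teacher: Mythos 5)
Your proof is correct and is precisely the ``near identical argument'' to Proposition~\ref{thm:0} that the paper invokes without writing out: locate the sample $x_i\in\bXsafe$ whose $\epsilon$-ball contains $x$ (which comes directly from the definition of $\bD$, as you rightly note, rather than from the net property of $\Xsafe$ over all of $\D$), then apply the local Lipschitz bound and the hypothesis $\epsilon\le\gammasafe/L_h(x_i)$. Nothing further is needed.
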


\subsection{Guaranteeing valid local  control barrier functions}
\label{sec:theory:valid_local}
The conditions in the previous subsection guarantee that the level-sets of the learned CBF satisfy the desired properties.  We now derive conditions that ensure that the derivative constraint \eqref{eq:cbf_const} is also satisfied by the learned CBF.

Because we assume that the CBF functions $\cbf(\xx)$ are {twice} continuously differentiable over a compact domain $\N\cup\D$, we immediately have that $\nabla\cbf(\xx)$ is Lipschitz continuous with local Lipschitz constant $\Ln(x)$.  Note that to verify that a CBF $\cbf(\xx)$ satisfying the constraints of the previous section is valid, it suffices to show that there exists a single control input $\uu\in\U$ such that the derivative constraint \eqref{eq:cbf_const} holds.  Our approach is to use the control inputs {$\{ u_i : (x_i, u_i) \in \Zdynamics \}$} provided by the expert demonstrations.  We discuss the consequences of this choice further in Section \ref{sec:data}.

To that end, note that for a fixed $\uu_i$, the function $\q(\xx):= \langle \nabla\cbf(\xx),\f(\xx)+\g(\xx)\uu_i\rangle+\alpha(\cbf(\xx))$ is Lipschitz continuous, with Lipschitz constant denoted by $\Lq(\xx)$, as $\nabla h$, $f$ and $g$ are all assumed to be Lipschitz continuous.  Following a similar argument as in the previous subsection, we then have the following result guaranteeing that the learned CBF satisfies the derivative constraint \eqref{eq:cbf_const} for all $x \in \D$.

\begin{proposition}\label{thm:1}
Suppose $q(x)$ is Lipschitz continuous with constant $\Lq(\xx)$.  Let $\gammadynamics>0$, and $\Xsafe$ be an $\epsilon$-net of $\D$ with $\epsilon\leq \gammadynamics/\Lq(\xx_i)$ for all $x_i\in \Xsafe$. Then if
\begin{align}\label{eq:derivative}
\langle \nabla\cbf(\xx_i),\f(\xx_i)+\g(\xx_i)\uu_i\rangle \ge -\alpha(\cbf(\xx_i))+\gammadynamics
\end{align}
for all $x_i \in \Xsafe,$ it holds that $q(x)\geq 0$ for all $x \in \D$.

\end{proposition}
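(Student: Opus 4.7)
The plan is to mimic the template used in the proof of Proposition~\ref{thm:0}, with the role of the threshold $\gammaunsafe$ replaced by $\gammadynamics$ and the role of the $\bar\epsilon$-net on $\N$ replaced by the $\epsilon$-net $\Xsafe$ on $\D$. The only conceptual subtlety is that the function $q$ is defined relative to a specific control input $u_i$; the natural reading of the statement is that, for each $x \in \D$, the function $q$ constructed with the control $u_i$ attached to the nearest data sample $x_i \in \Xsafe$ satisfies $q(x) \ge 0$. This is exactly what is needed to exhibit a CBF-consistent input at $x$ via constraint \eqref{eq:cbf_const}.

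First, I would fix an arbitrary $x \in \D$ and, using the assumption that $\Xsafe$ is an $\epsilon$-net of $\D$, pick a sample $x_i \in \Xsafe$ with $\|x-x_i\|_p \le \epsilon$. Let $u_i$ be the corresponding expert input, so that $(x_i,u_i) \in \Zdynamics$, and consider the scalar function
\begin{equation*}
q(y) \;=\; \langle \nabla \cbf(y), \f(y) + \g(y) u_i \rangle + \alpha(\cbf(y)),
\end{equation*}
which by hypothesis is Lipschitz continuous with local constant $\Lq(x_i)$. The assumption \eqref{eq:derivative} evaluated at this particular $x_i$ gives $q(x_i) \ge \gammadynamics$.

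Next, I would chain inequalities as in Proposition~\ref{thm:0}: writing $q(x) = q(x_i) + (q(x) - q(x_i))$, applying the local Lipschitz bound, and using the sampling bound $\epsilon \le \gammadynamics/\Lq(x_i)$ yields
\begin{equation*}
q(x) \;\ge\; q(x_i) - |q(x) - q(x_i)| \;\ge\; \gammadynamics - \Lq(x_i)\|x-x_i\|_p \;\ge\; \gammadynamics - \Lq(x_i)\,\epsilon \;\ge\; 0,
\end{equation*}
which establishes $q(x) \ge 0$ for all $x \in \D$. Consequently the derivative constraint \eqref{eq:cbf_const} is satisfied at every $x \in \D$ by choosing $\uu = u_i$ for the nearest expert sample, so that $u_i \in K_{\text{CBF}}(x)$, certifying $\cbf$ as a valid local CBF on $\D$.

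The proof is essentially mechanical and I do not anticipate a genuine obstacle; the only small point that must be stated carefully is that $q$ depends on the choice of $u_i$, so the conclusion should be phrased as "for each $x \in \D$ there exists a control (namely, the $u_i$ attached to the nearest $x_i$) for which the associated $q$ is nonnegative at $x$," rather than as a statement about a single global $q$. Everything else is a direct translation of the argument already presented for Proposition~\ref{thm:0}.
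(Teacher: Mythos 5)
Your proposal is correct and follows essentially the same argument as the paper's proof: identify the nearest net point $x_i$, use \eqref{eq:derivative} to get $q(x_i)\ge\gammadynamics$, and chain the Lipschitz bound with $\epsilon\le\gammadynamics/\Lq(x_i)$. Your added remark that $q$ implicitly depends on the expert input $u_i$ attached to the nearest sample is a fair clarification of a point the paper leaves implicit, but it does not change the substance of the argument.
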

\begin{proof}
Following a similar argument as the proof of Proposition \ref{thm:0}, we note that by equation \eqref{eq:derivative}, we have that $\q(\xx_i)\geq\gammadynamics$ for each $\xx_i\in\Xsafe$. We then have, for any $x\in\D$, that there exists a point $x_i \in \Xsafe$ satisfying $\norm{x-x_i}_p\leq {\epsilon}\leq\gammadynamics/\Lq(x_i)$, from which the following chain of inequalities follows immediately
\begin{align*}
     q(\xx) &=   q(\xx_i) + q(\xx)-q(\xx_i)
      \geq \gammadynamics - |q(\xx)-q(\xx_i)| \\
      &\geq \gammadynamics -\Lq(x_i)\norm{x-x_i}_p  \geq \gammadynamics -\Lq(x_i){\epsilon} \geq 0,
\end{align*}
where the first inequality follows from the assumption that $q(\xx_i)\geq  \gammadynamics$ for all $\xx_i \in \Xsafe$, the second by the Lipschitz assumption on $q(x)$, the third by the assumption that $\Xsafe$ forms an ${\epsilon}$-net of $\Xsafe$, and the final inequality by the condition on $\epsilon$ of the proposition.
\end{proof}

The following theorem, which follows immediately from the previous results, states a set of sufficient conditions guaranteeing that a learned CBF is locally valid on the domain $\N\cup \D$.  We next use these conditions to formulate an optimization based approach to learning a CBF from {expert demonstrations}. 
 
\begin{theorem}\label{thm:2}
Let a twice continously differentiable function $\cbf(\xx)$ be a candidate CBF, and let the sets $\Ss$, $\N$, $\D$, $\C$, and $\bD$, and the data-sets $\XN$, $\Xsafe$, and $\bXsafe$ be defined as above. Suppose that $\XN$ forms a $\bar{\epsilon}$-net of $\N$ satisfying the conditions of Proposition \ref{thm:0}, and that $\Xsafe$ forms an $\epsilon$-net of $\D$ satisfying the conditions of Corollary \ref{coro:0} \& Proposition \ref{thm:1}.  Then if $\cbf(\xx)$ satisfies constraints \eqref{eq:unsafe}, \textcolor{red}{\eqref{eq:xsafe}},  and \eqref{eq:derivative}, it holds that the set $\C$ is non-empty, $\bD\subseteq\C\subset\D\subseteq\Ss$, and the function $\cbf(\xx)$ is a valid local control barrier function on $\D$ with domain $\N \cup \D$.

\end{theorem}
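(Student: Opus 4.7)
The plan is to treat the theorem as a consolidation of Proposition \ref{thm:0}, Corollary \ref{coro:0}, and Proposition \ref{thm:1}, since each of its three pieces --- non-emptiness of $\C$, the inclusion chain $\bD \subseteq \C \subset \D \subseteq \Ss$, and validity of $h$ as a local CBF on $\D$ --- corresponds directly to one of these earlier results. The work is mostly to verify that the theorem's sampling and constraint hypotheses imply the hypotheses of each invoked result, and then to stitch the conclusions together.

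First I would handle the inclusions. Applying Corollary \ref{coro:0} to constraint \eqref{eq:xsafe} yields $h(x) \ge 0$ for every $x \in \bD$, which immediately gives $\bD \subseteq \C$; since $\bD$ is a union of non-degenerate $\epsilon$-balls around the points of $\bXsafe$, this also establishes non-emptiness of $\C$. By definition \eqref{eq:local_C}, $\C \subseteq \N \cup \D$, and Proposition \ref{thm:0} applied to constraint \eqref{eq:unsafe} forces $h(x) < 0$ on $\N$, so $\C \cap \N = \emptyset$ and hence $\C \subseteq \D$. The outermost inclusion $\D \subseteq \Ss$ is baked into the construction following \eqref{eq:set_D}. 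Next, for the CBF validity, I would apply Proposition \ref{thm:1} to constraint \eqref{eq:derivative}: for every $x \in \D$ there is a nearest expert sample $x_i \in \Xsafe$ whose associated expert control $u_i \in \U$ makes $q(x) = \langle \nabla h(x), f(x) + g(x) u_i \rangle + \alpha(h(x)) \ge 0$, which exhibits an admissible input realizing the supremum condition \eqref{eq:cbf_const} at $x$; combined with the inclusions above, this certifies $h$ as a valid local CBF on $\D$ with respect to $\C$ as in \eqref{eq:local_C}.

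The hard part, though modest, is purely bookkeeping rather than any real mathematical obstacle: one must check that the three sets of Lipschitz-and-sampling hypotheses of the invoked results are mutually compatible, which is precisely what the buffer construction defining $\bXsafe$ in \eqref{eq:bXsafe} was introduced to guarantee. A secondary subtlety worth flagging in the write-up is that Proposition \ref{thm:1} yields a \emph{state-dependent} choice of admissible input (the expert $u_i$ at the nearest $x_i$), which is still sufficient because \eqref{eq:cbf_const} only requires existence of some $u \in \U$ at each state; producing a locally Lipschitz safe feedback is a downstream concern addressed by the CBF-QP filter rather than by this theorem.
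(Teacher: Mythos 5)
Your proposal is correct and matches the paper's approach exactly: the paper offers no separate proof, stating only that the theorem ``follows immediately from the previous results,'' i.e., precisely the consolidation of Proposition \ref{thm:0}, Corollary \ref{coro:0}, and Proposition \ref{thm:1} that you carry out. Your additional remarks on non-emptiness via $\bD$, on $\C\cap\N=\emptyset$ forcing $\C\subseteq\D$, and on the state-dependent admissible input being sufficient for the existential condition \eqref{eq:cbf_const} are all consistent with the paper's intent.
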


\subsection{Control barrier filters} We introduce a simple and natural extension to the notion of a local CBF. Consider the same scenario as above, together with an additional set $\F\subseteq \Ss \setminus \C$ that satisfies the following condition: for each $\zeta_0\in \F$ there exists no continuous signal $\zeta:\RR_{\ge 0}\to\RR^\statedim$ with $\zeta(0):=\zeta_0$ and with $\zeta(t')\not\in \Ss$ for some $t'>0$ and $\zeta(t'')\not\in \C$ for all $t''>0$. This  means that the set $\C$ filters all trajectories starting from $\F$, i.e., each trajectory starting from $\F$ has to pass through $\C$ to escape $\Ss$ and thereby renders $\F$ safe (see Figure \ref{fig:control-barrier-filter}). This follows in the spirit of set invariance \cite{blanchini1999set}.  As illustrated in Figure \ref{fig:control-barrier-filter}, this allows us to remove the perhaps counter-intuitive requirement of having to introduce ``artificial'' unsafe samples in a region that is clearly safe, further reducing the conservatism of the resulting controller.

\subsection{Computing a Control Barrier Function}
\label{sec:compute}
Using the results of the previous subsection, we propose solving the following optimization problem to learn a CBF from expert trajectories:
\begin{subequations} \label{eq:opt}
\begin{alignat}{2}
    &\minimize_{h \in \calH} &&\norm{h} \notag \\
    &\st &&h(x_i) \geq \gammasafe \:, \quad \forall x_i \in \bXsafe(\Lc) \notag \\
    & &&h(x_i) \leq -\gammaunsafe   \notag \\
    & && \Lip(h(x_i),\bar{\epsilon}) \leq \Lc \quad \forall x_i \in \XN \:  \label{eq:lip1}\\
    & &&  q(x_i,u_i):=\langle\nabla h(x_i), f(x_i, u_i) \rangle + \alpha(h(x_i)) \geq \gammadynamics \:  \notag\\
    & && \Lip(q(x_i,u_i),\epsilon) \leq \Lq \quad \forall (x_i, u_i) \in \Zdynamics  \label{eq:lip2}
\end{alignat}
\end{subequations}

The positive constants $\gammasafe$, $\gammaunsafe$, $\gammadynamics$, $\Lc$ and $\Lq$ are
hyperparameters that are set according to the conditions of Theorem \ref{thm:2} given data-sets $\Xsafe$ and $\XN$ defining corresponding $\epsilon$ and $\bar{\epsilon}$-nets.  Here the constraints defined in equations \eqref{eq:lip1} and \eqref{eq:lip2} assume that there exists a function $\Lip(\cdot,\epsilon)$ that returns an upper bound on the Lipschitz constant of its argument in an {$\epsilon$-neighborhood.}
We note that it may be difficult to enforce these bounds while solving the optimization problem, in which case we must resort to bootstrapping the values of $\Lc$ and $\Lq$ by iteratively solving optimization problem \eqref{eq:opt}, computing the values $\Lc$ and $\Lq$ for the learned CBF $\cbf(\xx)$, verifying if the conditions of Theorem \ref{thm:2} hold, and readjusting the hyperparameters accordingly if not.  This is a standard approach to hyperparameter tuning, and we show in Section~\ref{sec:experiments} that it can indeed be successfully applied to {verifying} the conditions of Theorem \ref{thm:2}.

\subsubsection{Convexity} 
We first note that optimization problem~\eqref{eq:opt} is convex in $h$ if the function $\alpha$ is linear in its argument, and if we exclude the 
bounds \eqref{eq:lip1} and \eqref{eq:lip2}, and instead verify them via the bootstrapping method described above.  Therefore, if $\calH$ is parameterized as
$\calH = \{ h_\theta(\cdot) = \langle \phi(\cdot), \theta \rangle : \theta \in \Theta \}$
with $\Theta$ a convex set and $\phi(\cdot)$ a known but possibly nonlinear 
transformation, then problem~\eqref{eq:opt} is convex, and can be solved efficiently using standard solvers.
Note that very rich 
function classes such as infinite dimensional RKHS from
statistical learning theory can be approximated to arbitrary accuracy
as such a $\calH$ \cite{rahimi2008random}.

In the more general case when $\calH = \{ h_\theta(\cdot) : \theta \in \Theta \}$,
such as when $h$ is a DNN or when $\alpha$ is a general nonlinear function of its argument, optimization problem~\eqref{eq:opt}
is non-convex. Due to the computational complexity
of general nonlinear constrained programming, we propose an unconstrained
relaxation of problem~\eqref{eq:opt} which can be solved efficiently in practice
by first order gradient based methods. Let $[x]_+ = \max\{x, 0\}$ for $x \in \RR$.
Our unconstrained relaxation {results in} the following optimization problem:
\begin{align}
    \minimize_{\theta \in \Theta} & \norm{\theta}^2 + \lambda_{\mathrm{s}} \sum_{x_i \in \bXsafe} \Big[\gammasafe - h_\theta(x_i)\Big]_+ + \lambda_{\mathrm{u}} \sum_{x_i \in \XN} \Big[h_\theta(x_i) + \gammaunsafe\Big]_+ \label{eq:opt_relaxed} \\
    &\qquad + \lambda_{\mathrm{d}} \sum_{(x_i, u_i) \in \Zdynamics} \Big[\gammadynamics - \Big( \Big\langle \nabla h_\theta(x_i), f(x_i, u_i) \Big\rangle \nonumber + \alpha(h_\theta(x_i)) \Big) \Big]_+  \notag
\end{align}
The positive parameters 
$\lambda_{\mathrm{s}}, \lambda_{\mathrm{u}},\lambda_{\mathrm{d}}$ allow us to trade off the relative importance of each of the 
terms in the optimization.
While equation~\eqref{eq:opt_relaxed} is in general a non-convex optimization problem,
it can be solved efficiently in practice with stochastic first-order gradient methods
such as Adam or SGD.

\subsubsection{Lipschitz continuity of $\calH$}
As described earlier, because we assume that functions in $\calH$ are twice continuously differentiable and we restrict ourselves to a compact domain $\N \cup \D$, we immediately have that
$h$ and $\nabla h$ are both uniformly Lipschitz over $\N \cup \D$.
We show here two examples of $\calH$ where
it is computationally efficient to estimate an upper bound on the
Lipschitz constants of functions $h\in\calH$.

In the case of random Fourier features with $\ell$ random features, where $h(x) = \ip{\phi(x)}{\theta}$ and $\phi(x) \in \R^\ell$ is
\begin{align*}
    \phi(x) = \sqrt{\frac{2}{\ell}} (\cos(\ip{x}{w_1} + b_1), \dots , \cos(\ip{x}{w_\ell} + b_\ell)) \:,
\end{align*}
then we can analytically compute
upper bounds as follows.
First, we have by the Cauchy-Schwarz inequality $\abs{h(x_1) - h(x_2)} \leq \norm{\phi(x_1) - \phi(x_2)}_2\norm{\theta}_2$.
To bound $ \norm{\phi(x_1) - \phi(x_2)}_2$, we bound the spectral norm of the Jacobian
$D\phi(x)$, which is a matrix where the $i$-th
row is $-\sqrt{2/\ell} \sin(\ip{x}{w_i} + b_i) w_i^\T$.
Let $s_i := \sin(\ip{x}{w_i} + b_i)$ and 
observe that
\begin{align*}
    \norm{D\phi(x)} = \sqrt{\frac{2}{\ell}} \sup_{\norm{v}_2=1} \left( \sum_{i=1}^{\ell} s_i^2 \ip{w_i}{v}^2 \right)^{1/2} \leq \sqrt{\frac{2}{\ell}} \sup_{\norm{v}_2=1} \left( \sum_{i=1}^{\ell} \ip{w_i}{v}^2\right)^{1/2} = \sqrt{\frac{2}{\ell}} \norm{W}, 
\end{align*}
where $W$ is a matrix with the $i$-th row equal to $w_i$. While the bound
$\sqrt{2/\ell}\norm{W}$ can be used in computations,
we can further understand order-wise scaling
of the bound as follows.
For random Fourier features corresponding to the popular Gaussian radial basis function kernel, $w_i \stackrel{\mathrm{iid}}{\sim} \mathsf{N}(0, \sigma^2 I)$ where $\sigma^2$ is the (inverse) bandwidth of the Gaussian kernel.  Therefore, by standard results in non-asymptotic random matrix theory \cite{vershynin2018high}, we have that
\begin{align*}
    \norm{W} \leq \sigma(\sqrt{\ell} + \sqrt{n} + \sqrt{2\log(1/\delta)})
\end{align*}
w.p. at least $1-\delta$. Combining these calculations, we have that the Lipschitz constant of $h$ can be bounded by $\sqrt{2\sigma^2} (1 + \sqrt{n/\ell} + \sqrt{(2/\ell) \log(1/\delta)}) \norm{\theta}_2$ w.p. at least $1-\delta$.

We now bound the Lipschitz constant of the gradient $\nabla h(x) = D\phi(x)^\T \theta$. We do this by bounding the spectral norm of the 
Hessian $\nabla^2 h(x)= -\sqrt{2/\ell} \sum_{i=1}^{\ell} c_i \theta_i w_iw_i^\T$, with $c_i = \cos(\ip{x}{w_i} + b_i)$.
A simple bound is
\begin{align*}
    \norm{\nabla^2 h(x)} \leq \sqrt{2/\ell} \norm{\theta}_\infty \norm{W}^2 \leq 3\sqrt{2}\norm{\theta}_\infty \sigma^2 (\ell+n+2\log(1/\delta))/\sqrt{\ell},
\end{align*}
where the last inequality holds w.p. at least $1-\delta$.

When $h(x)$ is a DNN, accurately estimating the Lipschitz constant
is more involved.  In general, the problem of exactly computing the Lipschitz constant of $h$ is known to be NP-hard \cite{virmaux2018lipschitz}.  Notably, because most commonly-used activation functions $\phi$ are known to be 1-Lipschitz (e.g. ReLU, tanh, sigmoid), a naive upper bound on the Lipschitz constant of $h$ is given by the product of the norms of the weight matrices; that is, $L_h \leq \prod_k\norm{W^k}$.  However, this bound is known to be quite loose~\cite{fazlyab2019efficient}.  Recently, the authors of \cite{fazlyab2019efficient} proposed a semidefinite-programming based approach to efficiently compute an accurate upper bound on $L_h$.  In particular, this approach relies on incremental quadratic constraints to represent the couplings between pairs of neurons in the neural network $h$. On the other hand, there are relatively few results that provide accurate upper bounds for the Lipschitz constant of the gradient of $h$ when $h$ is a neural network.  While ongoing work looks to extend the results from \cite{fazlyab2019efficient} to compute upper bounds on $\Lip(\nabla h)$, to the best of our knowledge, the only general method for computing an upper bound on $\Lip(\nabla h)$ is through post-hoc sampling \cite{wood1996estimation}.

\subsection{Data Collection}
\label{sec:data}
We briefly comment on how data should be collected to ensure that the conditions of Theorem \ref{thm:2} are satisfied.

\subsubsection*{What should the experts do?} At a high level, our results state that if a smooth CBF can be found that satisfies the constraints \eqref{eq:safe}, \eqref{eq:unsafe}, and \eqref{eq:derivative} over a sufficiently fine sampling of the state-space, then the resulting function is a valid CBF. We focus here on the derivative constraint \eqref{eq:derivative}, which must be verified to hold for \emph{some} $u \in \U$, by using the expert example data $(x_i,u_i)$.  In particular, the more transverse the vector field $f(x_i,u_i)$ is to the level sets of the learned CBF $h(x_i)$ {(i.e., the more parallel it is to the inward pointing normal $\nabla h(x_i)$)}, the larger the inner-product term in constraint \eqref{eq:derivative} is \emph{without} increasing the Lipschitz constant of $h(x)$.  In words, this says that the expert demonstrations \emph{should demonstrate how to move away from the unsafe set.}  {This also highlights the role of actuation authority in the ability to learn smooth CBFs: systems with larger actuation authority are more easily  able to align the closed loop vector field $f(x_i,u_i)$ away from the unsafe set.}

\subsubsection*{Constructing $\epsilon$-nets}
In order to construct an $\epsilon$-net of a set $\Ss$,
a simple randomized algorithm which repeatedly uniformly samples
from $\Ss$ works with high probability (see, for example, \cite{vershynin2018high}). Hence, as long as we can efficiently sample from $\Ss$ (e.g. when $\Ss$ is a basic primitive set or has an efficient set-membership oracle), uniform sampling is a viable strategy.  Alternatively, a gridding approach can be taken.  We note that in either case, for a set of diameter $r$ on the order of $O(\left(\frac{r}{\epsilon}\right)^d)$ samples are required.  While this exponential dependence is undesirable, we observe that in practice, the expert demonstrations allow us to focus on a subset of the state-space associated with desirable behavior, significantly reducing the diameters of the sets to be sampled.

\section{Numerical Experiments}
\label{sec:experiments}

{All code is publicly available at \href{https://github.com/unstable-zeros/learning-cbfs}{https://github.com/unstable-zeros/learning-cbfs}}.

\subsection{Planar Example}
\label{sec:experiments:planar}

Our first experiment is the
following two dimensional planar system
adapted from \cite{kolathaya19}:
\begin{align}
    \dot{x}_1 &= - x_1 + (x_1^2 + \delta) u_1  \\
    \dot{x}_2 &= - x_2 + (x_2^2 + \delta) u_2, \notag 
\end{align}
where $\delta>0$ is a fixed parameter guaranteeing that the
system is globally feedback linearizable. We set $\delta = 1$
in our experiments.
The desired safe set is $\Ss = \{ x : x_1 \leq 1 \:, \:\: x_2 \leq 1 \}$. We generate expert data for this system as follows.
Because the system is feedback linearizable,
given a desired trajectory $x_d(t)$, we can
easily design a nominal controller which tracks $x_d(t)$.
We can then construct a safe controller (w.r.t. $\Ss$)
by solving the CBF-QP problem \cite{ames2014control,ames2017control} with 
the  CBF $h(x) = \min\{1-x_1, 1-x_2\}$.

We design two sets of desired trajectories.
Let the unit vector $v(\theta) = [-\cos{\theta} \ \sin{\theta}]^\T$.
The first set is defined for a fixed $r > 0$ as $x_d(t) = r v(t)$ from $t \in [0, 2\pi]$.
We do this for $r \in \{0.2666, 0.3, 0.3333\}$,
sampling $80$ time equi-spaced points along each curve.
The second set of desired trajectories are
for a fixed $\theta \in [0, 2\pi]$, where we
consider a trajectory that starts at 
$x(0) = 0.4666v(\theta)$ and
ends up at $x(t_f) = 0.3666v(\theta)$,
and one where $x(0) = 0.1333v(\theta)$
and $x(t_f) = 0.2333v(\theta)$.
We grid across both $\theta \in [0, 2\pi]$ and $t \in [0, t_f]$ to ensure a densely sampled
set of points. All sample points $\Zdynamics$ are shown in Figure~\ref{fig:toy_data}(left). We consider the $x_i$ corresponding to
the circular trajectories (green in Fig.~\ref{fig:toy_data}) as defining $\bXsafe$.
We then set $\Xunsafe$ to be points sampled
(red in Fig.~\ref{fig:toy_data}) along the
circle at $r=-0.5$ and $r=-0.1$.
Our samples are specifically chosen to
form a net over $\D$
and $\N$, 
with $\epsilon =0.01666$ and $\bar{\epsilon} = 0.0333$, respectively.

\begin{figure}[t!]
\centering
~\includegraphics[height=1.75in]{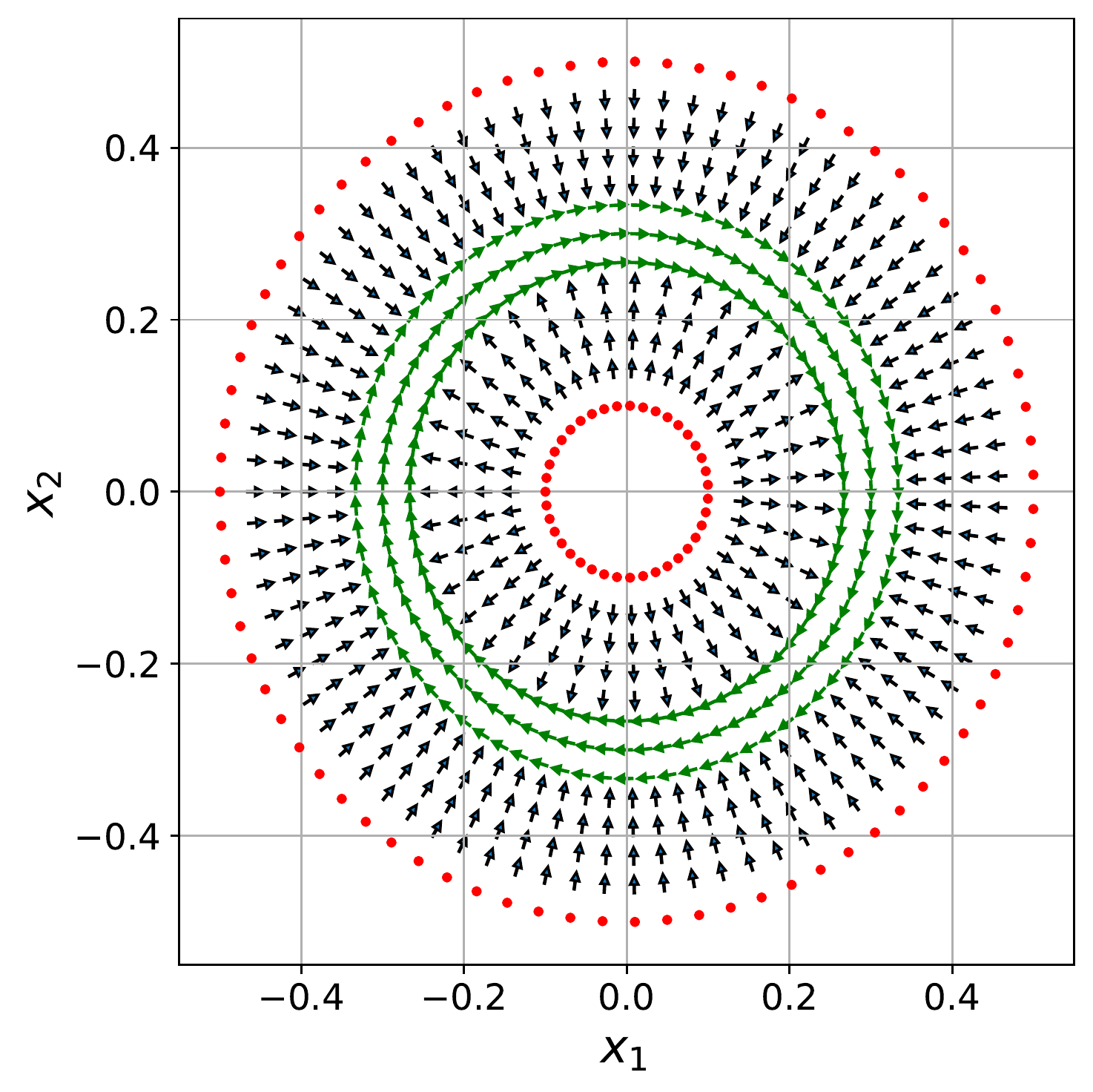}~\includegraphics[height=1.75in]{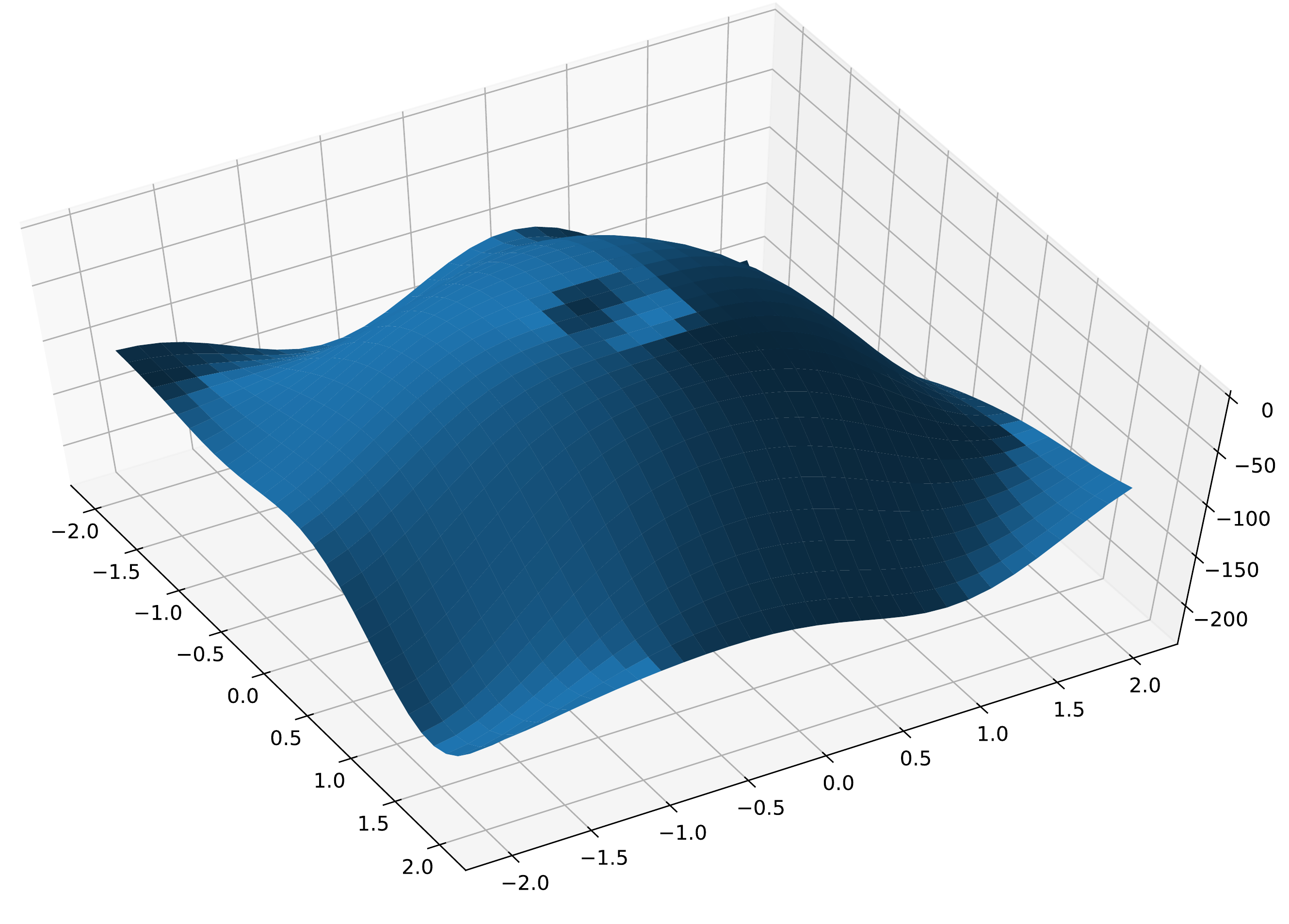}~\includegraphics[height=1.75in]{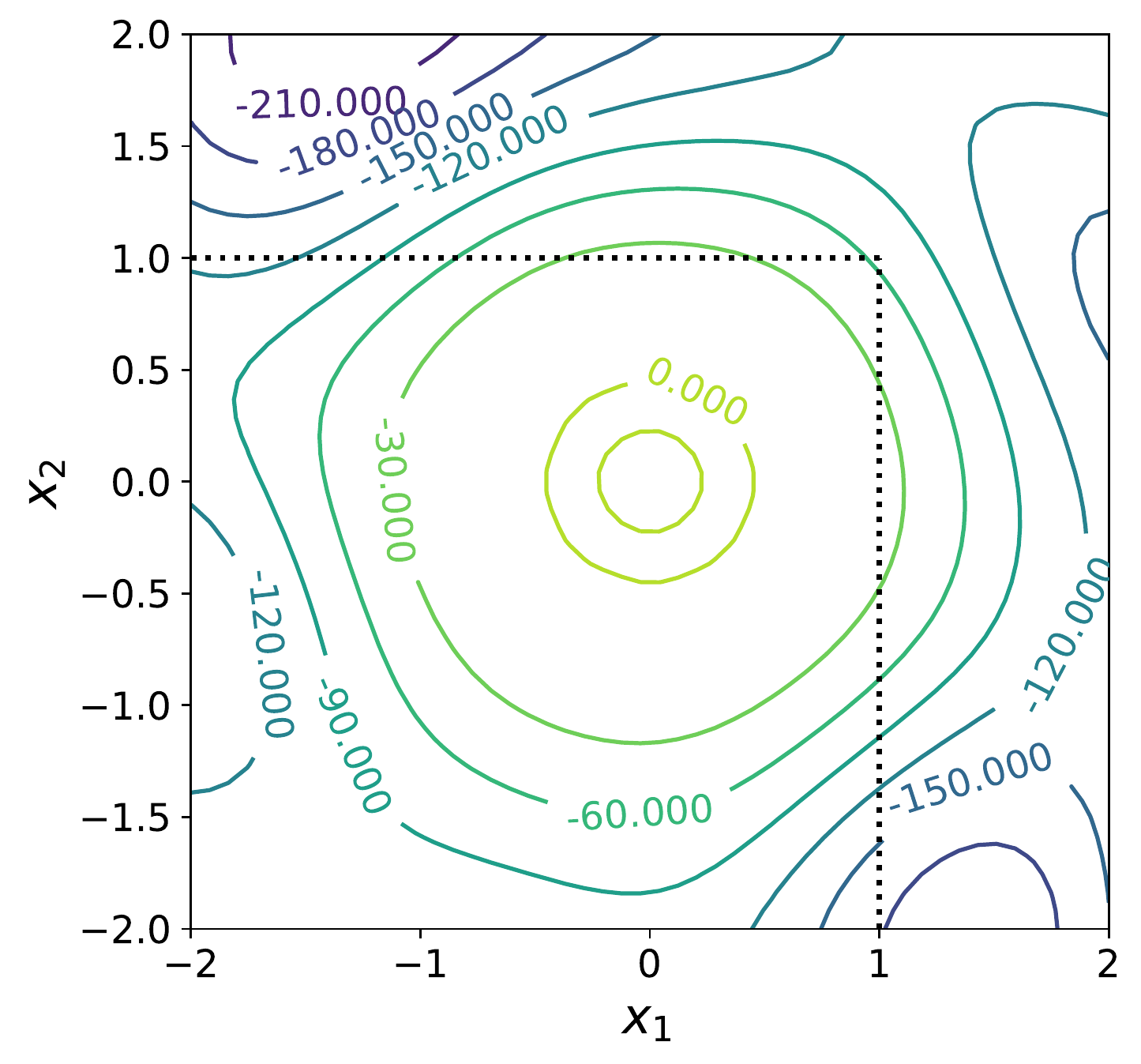}~
\caption{Left: Plot of the expert trajectories (green), dynamic samples (black) and unsafe samples (red) used for training.  Center: Surface plot of the learned CBF. Right: Level set plot of the learned CBF for the two dimensional planar example. The dotted black line
represents the boundary of the safe set $\Ss$.}
\label{fig:toy_data}
\end{figure}

We parameterize $\mathcal{H}$ using $\ell=200$
random Fourier features corresponding to the
Gaussian kernel with $\sigma = 1.2$. We set $\alpha(x) = x$ and then solve the optimization problem
with $\gammasafe = 0.1$,
$\gammaunsafe = 0.3$,
$\gammadynamics = 0.01$
using cvxpy \cite{cvxpy} with the MOSEK \cite{mosek} backend. 
Next, we verify that our specific choices of
$\gammasafe, \gammaunsafe, \gammadynamics$
satisfied the necessary conditions,
computing $\norm{\nabla h(x_i)}$ and
$\norm{\nabla_x q(x_i, u_i)}$ to obtain
$L_h(x_i)$ and $L_{q}(x_i)$, respectively. This verification is shown in Fig.~\ref{fig:toy_slacks}.
{The resulting CBF $h(x)$ is plotted in Fig.~\ref{fig:toy_data} (center), and its  level sets are shown in Fig.~\ref{fig:toy_data}(right), from which it can be seen that the zero level sets are well within the safe set (demarcated by the black dotted line).  As can be observed, the set $\C$ is an annulus with approximate inner radius of $.2333$ and and approximate outer radius of $.4$, and we note that the corresponding set $\D$ over which the CBF is valid is an annulus with inner radius of approximate radius $.1333$ of approximate radius $.4666$.} {Finally, in Fig.~\ref{fig:trajectories}, we show the evolution of a system governed by the CBF-QP controller defined by the learned CBF $h(x)$ for circular reference trajectories of varying radii $r$, beginning at initial conditions of $x_0 = (-r,0)$.  First observe that for the trajectory of radius $r=0.3$, which lies within the CBF safe set $\C$, we replicate the expert behavior (dashed orange line).  Next, notice that all other trajectories are seen to converge to the learned safe set $\C$ -- perhaps surprisingly, even those trajectories beginning well outside of the set $\D$ over which the learned CBF is provably valid exhibit this favorable behavior, suggesting that the smoothness conditions imposed during training allow for generalization well beyond previously seen expert behavior.}

\begin{figure}
\centering
    \includegraphics[width=\columnwidth]{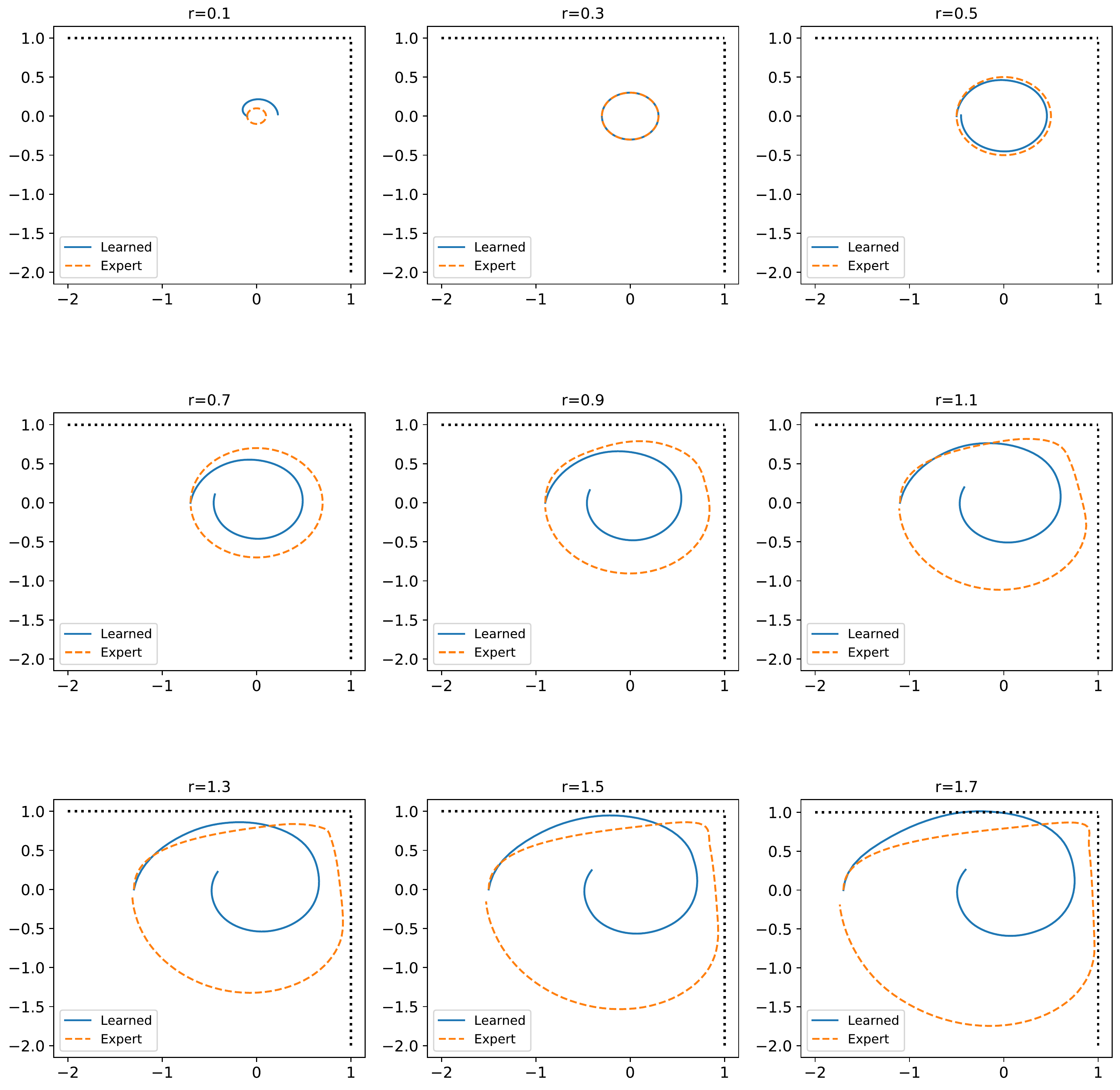}
    \caption{System trajectories for the planar example under the CBF-QP controller defined by the learned CBF $h$ shown in Fig.~\ref{fig:toy_data}.  For initial conditions beginning outside of $\mathcal{C}=\{ x \, | \, h(x) \geq 0\}$, note that the trajectory converges to $\mathcal{C}$, illustrating the robustness benefits of the set $\mathcal{D}\supset \mathcal{C}$.}
    \label{fig:trajectories}
\end{figure}



\begin{figure}[ht]
    \centering
    \includegraphics[width=0.8\columnwidth]{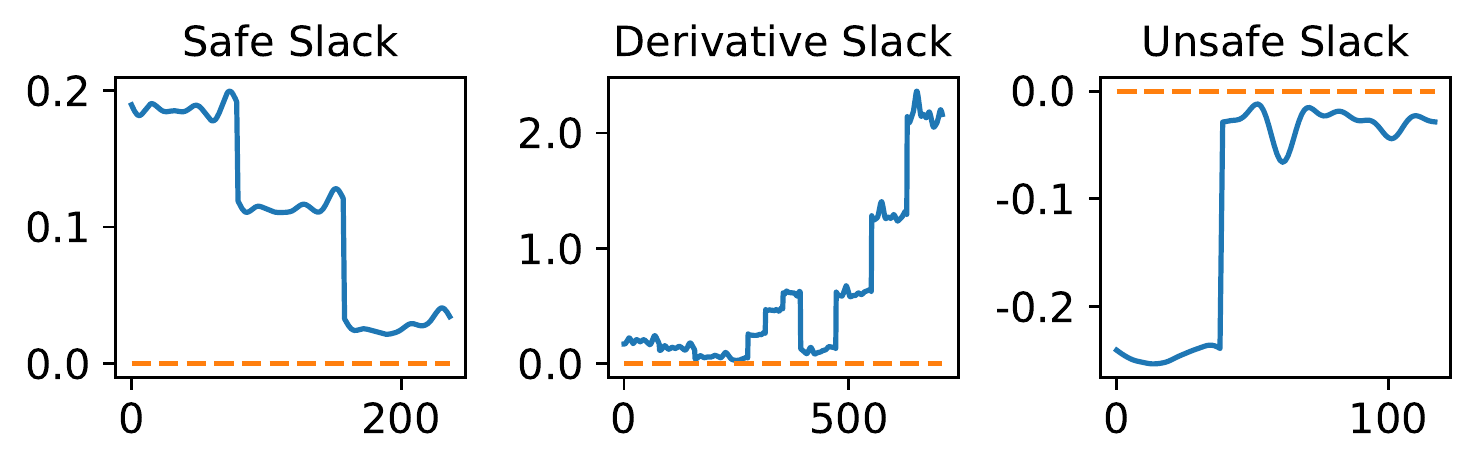}
    \caption{Safe, derivative, and unsafe slacks which verify the sufficient conditions given in Theorem~\ref{thm:2}.  (Left) The safe slack plot, for each constraint, shows the value $h(x_i) - L_h(x_i) \epsilon$, which needs to be positive.  (Middle) The derivative slack plot, for each constraint, shows the value of $q(x_i, u_i) - L_q(x_i) \epsilon$, which also needs to be positive.  (Right) The unsafe slack plot, for each constraint, shows the value of $h(x_i) + L_h(x_i) \bar{\varepsilon}$, which needs to be negative.}
    \label{fig:toy_slacks}
\end{figure}

\subsection{Aircraft Collision Avoidance}
\label{sec:experiments:uav}

In this subsection, we apply the control barrier filter technique (Section \ref{sec:theory:valid_local}) to the aircraft collision avoidance problem in \cite{squires2018constructive}. 
The joint state vector of the two aircraft, indexed with $a$ and $b$, is $x=[p_{x,a} \ p_{y,b} \ \theta_{a} \ p_{x,b} \ p_{y,b} \ \theta_{b}]^\T \in \mathbb{R}^6$, denoting positions in the $(x,y)$-plane and orientations.
The controls $u=[v_{a} \ \omega_{a} \ v_{b} \ \omega_{b}]^\T \in \mathbb{R}^4$ are the translational and angular velocities with constraints $0.1 \leq v_{a} \leq 1.0$ and $-1.0 \leq \omega_{a} \leq 1.0$.
The control goal is to reach the target states $x_g=[-5 \ 0 \ \pi \ 5 \ 0 \ 0]^\T$ if $p_{x,a}(0) \geq 0$, $p_{x,b}(0) \leq 0$ or $x_g=[5 \ 0 \ 0 \ -5 \ 0 \ \pi]^\T$ if $p_{x,a}(0) \leq 0$, $p_{x,b}(0) \geq 0$.
The safety specification is that the two airplanes should maintain a minimal distance $D_s = 0.5$ to avoid collisions. 
To this end, we define the geometric safe set as,
\begin{equation}
\label{eq::airplane_safe_set}
\begin{aligned}
\Ss:= \left\{x \in \R^6 \,\middle\vert\, p_{x,r}^{2}+p_{y,r}^{2} \geq D_{s}^{2} \right\}.
\end{aligned}
\end{equation}
where $p_{i,r} = p_{i,a}-p_{i,b}, \ i \in \{x,y\}$ is the relative position.

\subsubsection*{Generating training data}
We consider two ways of generating expert demonstrations.
First, we used a standard tracking model predictive contoller (MPC) as the nominal controller equipped with the closed form constructive CBF in~\cite{squires2018constructive} for collision avoidance (which we refer to as CBF-MPC).  To generate the expert trajectories, we started the system from 400 randomly generated initial conditions inside the set $\Ss$.  Each run terminated when the airplanes were sufficiently far away from each other.  Furthermore, we {obtained safe and unsafe samples by uniformly sampling from the sets $\F = \left\{x \in \R^6 \,\middle\vert\, (3 D_s)^2 \leq p_{x,r}^{2}+p_{y,r}^{2} \leq (5 D_s)^{2} \right\}$ and $\N = \left\{x \in \R^6 \,\middle\vert\, p_{x,r}^{2}+p_{y,r}^{2} \leq (1.1 D_{s})^{2} \right\}$, respectively}; these state samples, as well as the expert trajectories, are shown in Figure \ref{fig:data_plane} in relative coordinates.

Secondly, we built a web-based simulator that allows a user to control two simulated aerial vehicles.  As before, the goal of the simulation is to control the two aerial vehicles such that they do not collide.  We emphasize that these trajectories were solely by human guidance; no nominal controller was used. The data is plotted in Figure \ref{fig:data_plane}.

\begin{figure}[t]
\centering
~\includegraphics[scale=0.7]{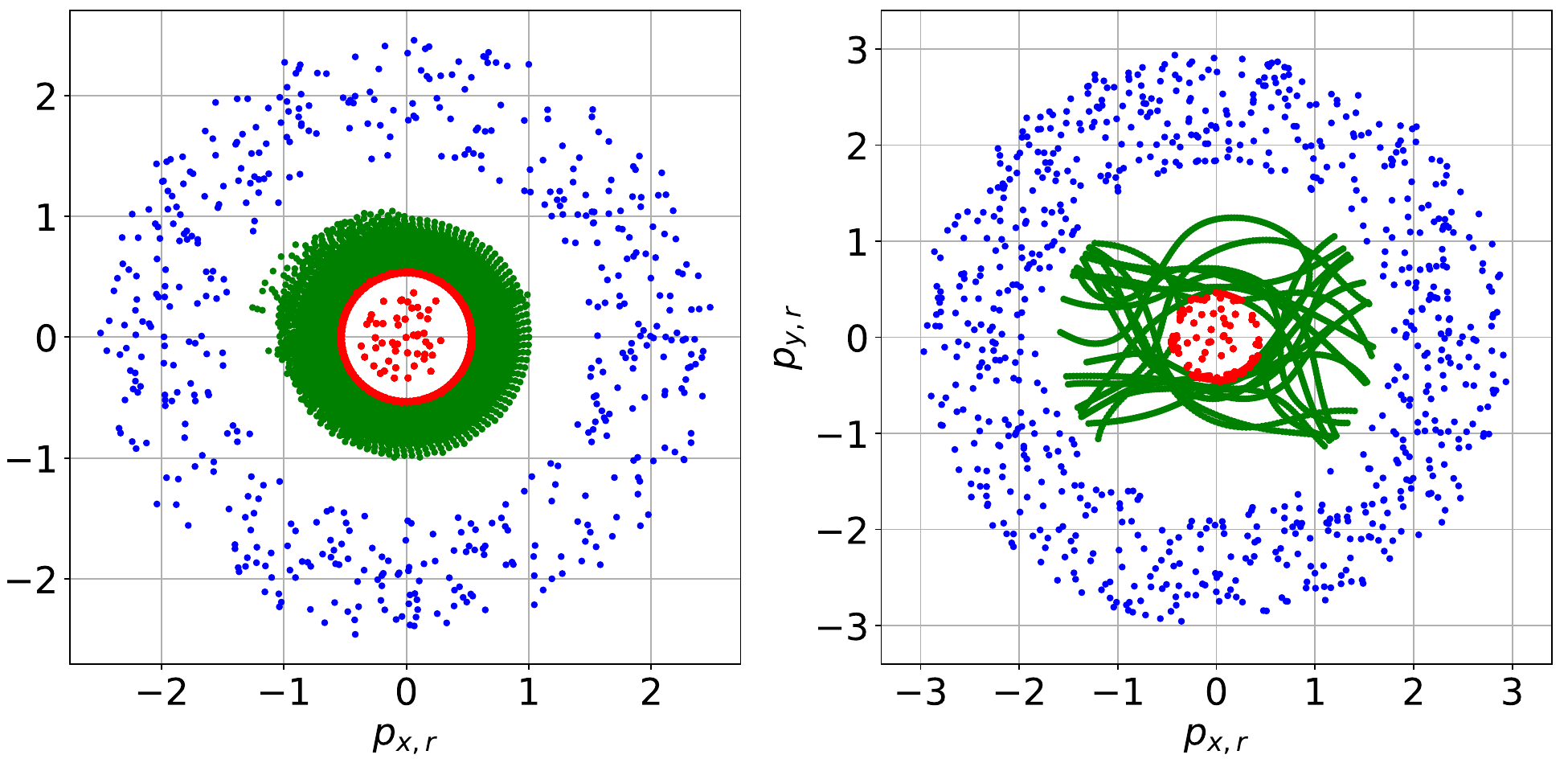}~
\caption{Left: Plot of the expert trajectories $\Zdynamics$ generated by CBF-MPC (green), safe samples $\XS$ (blue) and unsafe samples $\XN$ (red). 
Right: Plot of the expert trajectories obtained from human demonstrations with the safe and unsafe samples.} 
\label{fig:data_plane}
\end{figure}

\subsubsection*{Training procedure}
We parametrized the CBF candidate $h(x)$ with a two-hidden-layer fully-connected neural network with 64 neurons in each layer and tanh activation functions.
The training procedure was implemented using JAX \cite{jax2018github} and the Adam algorithm with a cosine decay learning rate.
We trained the neural network for $10^5$ epochs using the loss in \eqref{eq:opt_relaxed} with $\lambda_{\mathrm{s}} = 2.0$, $\lambda_{\mathrm{u}} = 2.0$, $\lambda_{\mathrm{d}} = 15.0$, $\gamma_{\text{safe}} = 3.0$, $\gamma_{\text{unsafe}} = 0.5$ and $\gamma_{\text{dyn}} = 0.05$.  
Each of these hyperparameters was chosen via grid-search.
The learned CBFs and the closed form CBF~\cite{squires2018constructive} evaluated at the training points are plotted in Figure \ref{fig:cbf_plane} in relative coordinates.

\begin{figure}[t!]
\centering
~\includegraphics[scale=0.7]{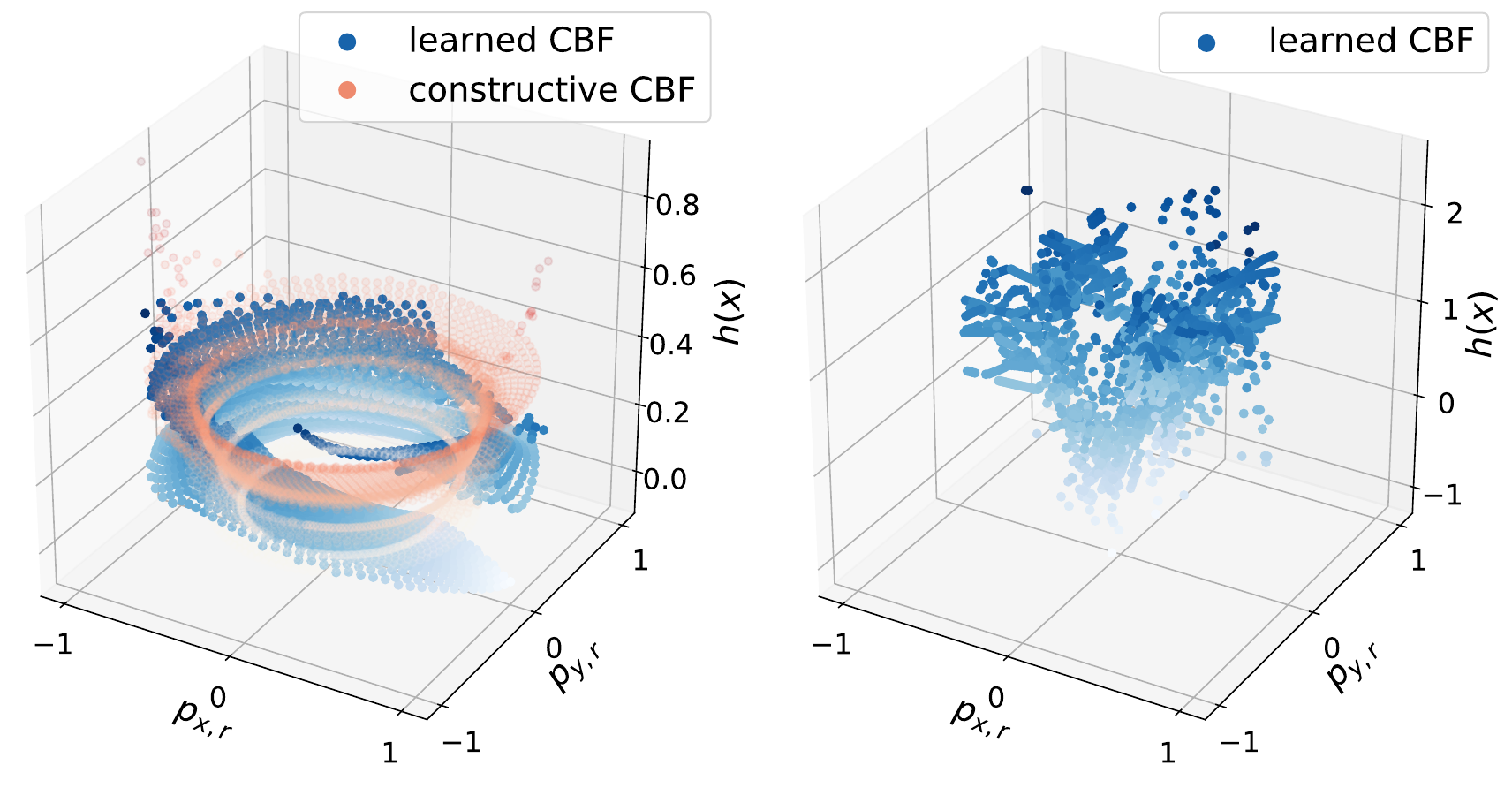}
\caption{Left: The CBF learned from CBF-MPC (blue) and the closed form CBF in~\cite{squires2018constructive} (red) evaluated at the states in training data-set. Right: The CBF learned from human demonstrations. }
\label{fig:cbf_plane}
\end{figure}

\subsubsection*{Closed-loop control with learned CBF}  To demonstrate the efficacy of the CBF learned from expert demonstrations, we used it in the aircraft collision avoidance problem with the same control goal and safety specification as in \eqref{eq::airplane_safe_set}.  The two airplanes were initialized at various symmetric initial positions on the circle $p_x^2+p_y^2 = 1$ such that they were facing each other.  In this way, if both airplanes used the nominal MPC controller, they would collide.  

The closed-loop state trajectories using our learned CBF are shown in Figure \ref{fig:closed_loop_plane}.  The CBFs learned on both data-sets successfully steer the airplanes away from each other for all initial states, which experimentally validates the forward invariance of $\Ss$.  As a comparison, we also plotted the state trajectories produced by the CBF from~\cite{squires2018constructive} under the same settings in Figure \ref{fig:closed_loop_plane}.
Since this CBF is derived analytically, it appears to render more aggressive control actions which manage to separate the airplanes at a closer distance.

\begin{figure}[t!]
\centering
\includegraphics[scale=0.45]{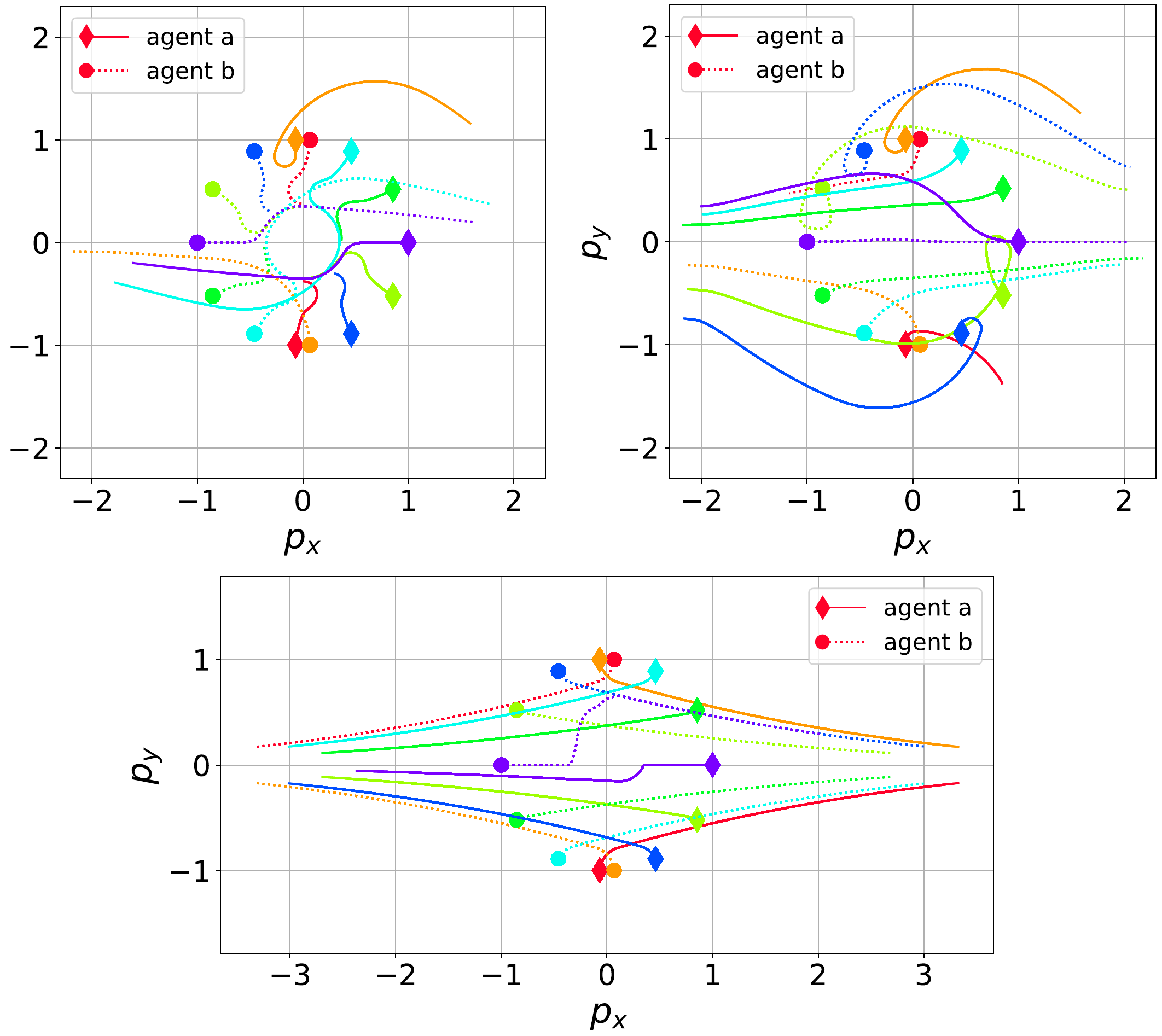}
\caption{Closed-loop control of the two airplanes starting from multiple initial conditions using closed form CBF from~\cite{squires2018constructive} (top left), the CBF learned from CBF-MPC (top right), and the CBF learned from human demonstrations (bottom). Trajectories of the same run are marked with an identical color. The initial states of agent a are marked with diamonds, and those of agent b with circles.}
\label{fig:closed_loop_plane}
\end{figure}

\section{Conclusion}
\label{sec:conclusion}
We proposed and analyzed an optimization based approach to learning CBFs from expert demonstrations for known nonlinear control affine dynamical systems.  We showed that under suitable assumptions of smoothness on the underlying dynamics and the learned CBF (which can be guaranteed using classic and recent \cite{fazlyab2019efficient} results for RKHS and DNNs), and under sufficiently fine sampling, the learned CBF is provably valid, guaranteeing safety.  This work provides a firm theoretical foundation for future exploration that will look to leverage tools from statistical learning theory to reduce the sample complexity burden of the proposed method by focusing on guaranteeing safety for ``typical'' behaviors, as opposed to uniform coverage of the state-space.

\section{Acknowledgements}

This work was supported in part by the Knut and Alice Wallenberg Foundation (KAW) and the SSF COIN project.

\bibliographystyle{IEEEtran}
\bibliography{main}

\end{document}